\pdfoutput=1
\documentclass[11pt]{article} 
\usepackage{aaai2027_copy} 
\nocopyright
\usepackage{amsthm}
\usepackage{amsmath}

\usepackage[disable]{todonotes}
\usepackage[hyphens]{url}  
\usepackage{graphicx} 
\usepackage{natbib}  
\usepackage{caption} 
\usepackage[ruled,vlined]{algorithm2e}
\usepackage{amsfonts}
\usepackage{algorithmic}
\usepackage{url}
\newtheorem{theorem}{Theorem}

\newtheorem{corollary}{Corollary}
\newtheorem{definition}{Definition}

\newtheorem{lemma}{Lemma}

\newif\ifarxiv
\arxivtrue

\newcommand{\OPT}{\mathrm{OPT}}

\newcommand{\ex}{\mathbb{E}}

\newcommand{\opt}{\ensuremath{\mathrm{OPT}}\xspace}

\newcommand{\jens}[1]{\todo[backgroundcolor=blue!25,bordercolor=blue]{J:~#1}}

\newcommand{\raj}[1]{\todo[backgroundcolor=green!25,bordercolor=green]{R:~#1}}

\usepackage{float}
\usepackage{newfloat}
\usepackage{listings}
\DeclareCaptionStyle{ruled}{labelfont=normalfont,labelsep=colon,strut=off} 
\floatstyle{ruled}
\newfloat{listing}{tb}{lst}{}
\floatname{listing}{Listing}
\usepackage{amsthm}

\usepackage{cleveref}
\Crefname{theorem}{Theorem}{Theorems}
\Crefname{lemma}{Lemma}{Lemmas}
\Crefname{corollary}{Corollary}{Corollaries}
\Crefname{proposition}{Proposition}{Propositions}
\Crefname{observation}{Observation}{Observations}
\Crefname{claim}{Claim}{Claims}
\Crefname{fact}{Fact}{Facts}
\Crefname{assumption}{Assumption}{Assumptions}
\crefname{thm}{Theorem}{Theorems}

\crefname{thm}{Theorem}{Theorems}
\crefname{lem}{Lemma}{Lemmas}
\usepackage{booktabs}
\usepackage{nicefrac}
\usepackage{thm-restate}
\title{Correlation Clustering with Random Partial Information}
\author {
    Rajath Rao K.N \textsuperscript{\rm 1},
    Jens  Schl\"oter\textsuperscript{\rm 2},
    Sami Davies \textsuperscript{\rm 3},
    Amira Ouchene \textsuperscript{\rm 1} ,
    Yasamin Nazari \textsuperscript{\rm 1,\rm 2}
}
\affiliations {
    \textsuperscript{\rm 1}Vrije Universiteit Amsterdam, The Netherlands\\
    \textsuperscript{\rm 2}CWI Amsterdam, The Netherlands \\
    \textsuperscript{\rm 3}Department of EECS, UC Berkeley
}

\begin{document}

\maketitle
\begin{abstract}
 Correlation clustering is a fundamental unsupervised learning problem. On complete graphs, both the min-disagreement and min-max objectives admit constant-factor
approximations, yet on general (non-complete) graphs, the best guarantees blow up to
$O(\log n)$ and $O(\sqrt{n})$.
This gap between the two regimes motivates the following question: are there classes of incomplete graphs that circumvent
the lower bounds on general graphs and admit approximation guarantees
approaching those attainable on complete graphs?
We study a natural class of graphs obtained by randomly subsampling a complete signed
graph $G$, where each edge is independently deleted with probability $q$.
For such graph instances both for the min-max and the min-disagreement objectives, we prove approximation guarantees (depending on $q$) that are substantially better than the bounds achievable for general graphs.
We supplement our theoretical results with experiments that also suggest that the approximation ratios of our algorithm are
close to those of the complete graph and better than the worst-case bounds
for general (non-complete) graphs. 
\end{abstract}

\section{Introduction}

Correlation clustering is a fundamental problem in unsupervised learning that has received much attention over the past two decades. The input is an undirected graph $G=(V,E)$, with edges labeled either positive $(+)$ or negative $(-)$. A positive edge indicates its endpoints are similar, while a negative edge indicates they are dissimilar. 
A valid clustering $\mathcal{C}$ is a partition of $V$ that can have any number of clusters. An edge is in \emph{disagreement} with respect to $\mathcal{C}$ if it is a positive edge whose endpoints are in different clusters of $\mathcal{C}$, or if it is a negative edge whose endpoints are in the same cluster of $\mathcal{C}$. Our goal is to find the optimal clustering (using any number of clusters) with respect to an objective function defined on such disagreements.
The problem is most well-studied in the setting when $G$ is a complete graph. For the objective of finding a clustering that minimizes the total number of edges in disagreement, called the \emph{min-disagreement} objective, the problem is NP-hard to approximate within a factor of $24/23$ \cite{cao2024}. There are many constant factor approximation algorithms for this setting \cite{bansal2004correlation, charikar2005clustering, ACN-pivot, chawla2015near, cohen2022correlation, Cohen-AddadLPTY24}, with the current best factor being 1.485 \cite{cao2024}. 

The state of affairs is similar for the objective of finding a clustering that minimizes the \textit{maximum} number of disagreements incident to any vertex, called the \emph{min-max} objective.
The min-max objective is NP-hard to approximate within a factor better than $4/3$, but there are several constant factor approximation algorithms \cite{puleo2015correlation, CGS17, KMZ19, DMN23, heidrich2024}, and the best factor is 3 \cite{cao2025min}. 

For both objectives, the problem is much less understood when $G$ is not a complete graph, a setting also known as the \emph{general graph} or partial information setting. For min-disagreement, there is an $O(\log n)$ approximation algorithm (for $n = |V|$), and improving this to $o(\log n)$ would necessitate an improved approximation for minimum multicut \cite{demaine2006correlation}. For the min-max objective, there are $\widetilde{O}(\sqrt{n})$ approximation\footnote{We note there is no logarithmic term in the work of \citet{CGS17}; the approximation factor is $O(\sqrt{n})$.} algorithms \cite{CGS17, KMZ19}, and natural LPs for the problem have large integrality gaps, making further algorithmic improvement difficult. On the other hand, the min-max objective is only known to be NP-hard to approximate within a factor better than $2-\varepsilon$ \cite{KMZ19}.

Therefore for both objectives, we have a large disparity between the best approximation factors in the complete graph versus general graph settings. Our motivation is to find a natural graph class that is not complete, but can attain better approximation factor. The simplest such class that we consider is one where a random set of edges are missing, either due to lack of information or the connection structure itself.

\subsection{Our contribution}
In the following results, we assume that $G= (V,E)$ is a complete signed graph on $n$ nodes. 
The graph $G'$ is constructed from $G$ by independently deleting each edge in $G$ with probability $q$. Formally, we write $\mathcal{S}_{q}(G)$ as the distribution over edge-induced subgraphs of $G$ obtained by independently deleting each edge with probability $q$, then sample $G' \sim \mathcal{S}_{q}(G)$.



 Our results make different assumptions on what information the algorithm has access to. We may assume access only to the original complete graph $G$. This setting can be motivated by a notion of robustness, where the guarantee holds even after some edges are removed. The second and most natural assumption for our main motivation is when we have access only to $G' \sim \mathcal{S}_{q}(G)$. Finally, we show more general guarantees when we have access both to $G$ and $G'$, which is more for getting theoretical insights that could be useful for possible future extensions of our result.

\paragraph{Min-Max Correlation Clustering.} Our first result is for min-max correlation clustering under random deletions.  Let $\opt_{\infty}(G')$ denote the cost of an optimal solution for the min-max objective on graph $G'$.

\begin{theorem}
\label{lemma:min_max_correlation_1}
    Given $G' \sim \mathcal{S}_{q}(G)$, there exists an algorithm that, with high probability, returns a clustering for $G'$ that is $O\!\left(\nicefrac{\log n}{1-q}\right)$-approximate. 
\end{theorem}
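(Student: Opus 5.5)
The plan is to reduce to an LP-based argument on $G'$, with the randomness of the deletions used only through concentration statements that hold simultaneously for all vertices and pairs. Throughout, write $p = 1-q$.

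\textbf{Step 1 (properties of $G'$).} First, using Chernoff bounds and a union bound over the $O(n^2)$ pairs, I will show that with high probability every vertex has $(1\pm o(1))\,pn$ neighbours in $G'$. Second, for every set $S$ of size $\Omega(\log n/p)$ and every vertex $v$, the number of $G'$-edges from $v$ into $S$ is within a constant factor of $p|S|$. These statements concern only fixed vertices and fixed edge counts, so, unlike a union bound over all clusterings, they survive the union bound. Everything afterwards is deterministic, conditioned on this event.

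\textbf{Step 2 (LP and rounding).} I will solve the standard min-max correlation clustering LP on $G'$. It has a semimetric $x_{uv}\in[0,1]$ on all pairs, including pairs whose edge was deleted, and the constraint
\[
\sum_{w:(u,w)\in E'^+} x_{uw} + \sum_{w:(u,w)\in E'^-} (1-x_{uw}) \le \lambda \quad \text{for every } u .
\]
Its value $\lambda^*$ is at most $\opt_\infty(G')$.

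I will round with a pivot/ball-carving scheme. Repeatedly pick an unclustered pivot $v$ and a radius $r \in [1/5, 2/5]$, and cut out the ball $B(v,r)$. The radius is chosen by region growing on the $G'$-volume, so that the positive $G'$-edges cut at the boundary are charged to LP volume. This choice of radius is what produces the $O(\log n)$ factor, exactly as in the $O(\log n)$ analysis for general graphs.

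\textbf{Step 3 (per-vertex charging).} Fix a vertex $u$ and bound its two kinds of disagreements separately.
\begin{itemize}
\item \emph{Positive edges of $u$ that are cut.} These are handled by region growing, giving a charge of $O(\log n)\cdot\lambda^*$ to $u$'s own LP budget.
\item \emph{Negative edges $uw$ inside a ball $B(v,r)$.} On a complete graph one charges such an edge via the triangle $u,v,w$: it has $x_{uw}$ or $x_{vw}$ bounded away from $0$, and the pivot's LP budget pays. In $G'$ the edges $uv$ and $vw$ may be missing, so this charge is not available.
\end{itemize}
To fix the second case, I will replace the single pivot triangle by the many two-hop paths through the ball. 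The ball has $\Omega(|B|)$ vertices at LP distance at most $r$ from $v$. By Step 1, $u$ (and each $w$) has about $p|B|$ present edges into the ball. Each such edge either is positive with small $x$-value, or is an LP disagreement that $u$ pays for. Averaging over these $\approx p|B|$ witnesses, the number of negative $G'$-edges of $u$ inside the ball is at most $O(1/p)$ times $u$'s LP cost plus the pivot's LP cost. Combining the two cases gives cost $O(\log n / p)\cdot\lambda^* \le O(\log n/(1-q))\cdot\opt_\infty(G')$ for every vertex.

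\textbf{Main obstacle.} The delicate step is this last witness-averaging for small balls. When $|B| = o(\log n/p)$, Step 1's concentration fails and $u$ may have no present edges into the ball. I would try to handle these balls separately. Since $|B| = O(\log n/p)$, placing $u$ in such a ball costs it at most $O(\log n/p)$ disagreements, which is still within the target bound whenever $\opt_\infty(G')\ge 1$. The case $\opt_\infty(G')=0$ is trivial to detect and solve exactly: take the connected components of the positive edges of $G'$ and check whether any negative edge lies inside a component. Making the two regimes fit together, without a union bound over clusterings, is where I expect most of the work to lie.
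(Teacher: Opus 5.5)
\textbf{Gap 1: the positive-edge case of Step~3.} This is the decisive gap. Region growing does not give a per-vertex guarantee. It shows that the \emph{total} number of positive edges cut is $O(\log n)$ times the \emph{total} LP volume, by charging each ball's boundary to that ball's volume. A boundary can pass right next to a vertex $u$ and cut all $D$ of its positive edges, even if each has LP length $\epsilon$ with $D\epsilon\ll 1$. So $u$'s own budget $\lambda^*$ does not pay for them. This per-vertex control is exactly what is missing for min-max correlation clustering on arbitrary graphs, where the best known bound is $\widetilde{O}(\sqrt{n})$ and the natural LP you wrote has large integrality gaps. Your treatment of this case uses nothing about $G'$ being a random subsample of a complete graph, so it cannot be right as stated.

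The case you single out as the obstacle is in fact easy. Your semimetric is defined on all pairs, including deleted ones. So for a negative edge $uw$ with $u,w\in B(v,r)$ you get $x_{uw}\le x_{uv}+x_{vw}\le 2r\le 4/5$, and $1-x_{uw}\ge 1/5$ is charged to $u$'s own constraint with no witnesses needed. Your small-ball fallback also bounds only the negative disagreements by $|B|$, not the positive edges leaving the ball.

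\textbf{Gap 2: Step~1 is false as stated.} A bound ``for every set $S$ of size $\Omega(\log n/p)$'' needs a union bound over $\binom{n}{|S|}$ sets, while a fixed-set Chernoff bound only gives $n^{-O(1)}$. Concretely, any $S$ inside the $G'$-non-neighbourhood of $v$ (of size about $qn$) receives no $G'$-edges from $v$. Your balls come from an LP solved on $G'$, so they are correlated with exactly the edges you want to count, and fixed-set concentration does not apply to them either.

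\textbf{How the paper avoids both problems.} It does not use an LP at all. In the spirit of the Heidrich et al.\ lemma, it applies concentration only to $N^+_{G'}(v)$. This set is determined by the edges at $v$, so it is independent of the edges from any other vertex $u$ into it. Suppose $|N^+_{G'}(v)|>\lambda d$ with $\lambda>8/(1-q)$ and $d\ge c\log n$. Then with high probability every $u$ has more than $4d$ $G'$-edges into $N^+_{G'}(v)$, so one of two things holds:
\begin{itemize}
\item $|N^+_{G'}(u)\cap N^+_{G'}(v)|>2d$, which forces $u$ and $v$ together in every clustering of cost at most $d$; or
\item $|N^-_{G'}(u)\cap N^+_{G'}(v)|>2d$, which forces them apart.
\end{itemize}
Hence clusters of high-degree vertices are recovered exactly, and the remaining vertices become singletons of cost at most $\lambda d$. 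The $\log n$ enters only through the requirement $d\ge c\log n$, combined with $\mathrm{OPT}_\infty(G')\ge 1$.
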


\raj{Is the approximation ratio correct? is it $\frac{\log n}{(1-q)}$}\jens{Can you elaborate why you think it is incorrect? It seems to match the proof...}
Throughout, we use \emph{with high probability} to  mean with probability at least $1-\nicefrac{1}{n}$. The randomness in this statement is only over the deletions, and the algorithm is deterministic. 
This approximation guarantee is substantially better than the best known of $O(\sqrt{n})$ for general graphs by \citet{CGS17}, and while their result is LP-based, our algorithm is combinatorial.
Our inspiration for the algorithm is a result of \citet{heidrich2024}: it shows that on complete graphs, vertices with large overlapping positive neighborhoods must be in the same cluster of any optimal min-max solution, and vertices with very different positive neighborhoods must be in different clusters of any optimal min-max solution.
Also, using randomized tools such as Johnson-Lindenstrauss (as used by \citet{cao2025min}) the running time can be improved to $\widetilde{O}(n^2)$.

\paragraph{Min-disagreement.}

We first show that if we have access to \emph{only} $G$, we can use \emph{any} $\alpha$-approximation algorithm for the complete graph $G$ to obtain an $O(\nicefrac{\alpha}{(1-q)^2})$-approximation algorithm for the graph $G' \sim \mathcal{S}_q(G)$ \textit{in expectation} over the randomness of the deletions. This result is based on the observation that the expected cost of an optimal solution for $G'$ does not decrease too fast with the deletion probability. More precisely, we prove $\ex[\opt(G')] \ge \Omega((1-q)^3 \cdot \opt(G))$, where  $\opt(G')$ and $\opt(G)$  denote the cost of optimal solutions for the min-disagreement objective on $G'$ and $G$.

We show that a similar property even holds with  high probability when $\opt(G) \in \Omega(\log n)$. Using this stronger statement, we show that, given access to $G$ and assuming $\opt(G) \in \Omega(\log n)$, each $\alpha$-approximation algorithm for the complete graph yields an $O(\nicefrac{\alpha}{(1-q)^2})$-approximation for $G'$ with high probability.

Next, we move on to the technically more challenging setting where the algorithm \emph{only} has access to $G' \sim \mathcal{S}_q(G)$. This turns out to be difficult to show in general, but we get better guarantees in certain regimes of $\opt(G)$ and depending on what information we have access to.
If $\opt(G)$ is small, we can show the following:

\begin{theorem}
\label{our_result_low-cost-direct_}
    If $\OPT(G) = O(\log^\gamma n)$ for any constant $\gamma$, then there is a polynomial time algorithm that, given only $G' \sim \mathcal{S}_{q}(G)$, computes an $O(\log (\frac{\log n}{(1-q)^2}))$-approximation for $G'$ with high probability. 
\end{theorem}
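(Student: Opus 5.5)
The plan is to reduce to min-multicut on $G'$ and exploit the fact that the optimal cost is small, so the relevant multicut instance has few "terminal pairs" and the $O(\log k)$ multicut approximation applies, with $k$ the number of demand pairs. Recall the standard reduction of \citet{demaine2006correlation}: min-disagreement correlation clustering on a general graph is equivalent, up to constant factors, to a weighted multicut instance. Each negative edge $(u,v)$ becomes a demand pair $(u,v)$; the multicut must separate it or pay for it. Positive edges become cut edges of weight one. Multicut admits an $O(\log k)$-approximation via the region-growing LP rounding of Garg--Vazirani--Yannakakis, where $k$ is the number of demand pairs. Applied directly this gives only $O(\log n)$, since $k$ can be $\Theta(n^2)$. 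The goal is therefore to shrink the effective number of demand pairs to $\mathrm{poly}(\log n/(1-q)^2)$ before running region growing.

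Step one: bound $\opt(G')$. Since $G'$ is a subgraph of $G$, any optimal clustering of $G$ costs at most $\opt(G)=O(\log^\gamma n)$ on $G'$, so $\opt(G')=O(\log^\gamma n)$ deterministically. Step two: use this to locate almost all of the cluster structure cheaply. Consider the positive subgraph $G'^+$. In $G$, a solution of cost $O(\log^\gamma n)$ means that all but $O(\log^\gamma n)$ vertices lie in clusters that are near-cliques with essentially no disagreement edges. In $G'$, each such cluster is a $G(n,1-q)$-type random subgraph of a clique. I will argue, with high probability, the following two facts.
\begin{itemize}
\item Every cluster of size at least $C\log n/(1-q)^2$ is recovered exactly. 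Pairs inside it share $\Omega((1-q)^2|C|)$ common positive neighbors, while pairs across clusters share only neighbors incident to disagreement edges, of which there are $O(\log^\gamma n)$. Common-neighborhood thresholding, in the spirit of the overlap argument used for \Cref{lemma:min_max_correlation_1}, then separates them.
\item Some cluster of the optimum may be recovered incorrectly, or may be too small to be recovered. The vertices involved are those in small clusters or incident to disagreements.
\end{itemize}

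Step three: fix the recovered large clusters and contract each of them. For small clusters, I will bound the cost-relevant structure. A component of $G'^+$ restricted to the unresolved vertices can only matter if it contains a negative edge of $G'$, which must be a disagreement in some cheap solution. Restricting demand pairs to those negative edges lying inside positive components of the residual graph, I expect their number to be bounded by $k=\mathrm{poly}(\log n/(1-q)^2)$. The argument has two parts. The total size of components that contain a disagreement is controlled by $\opt(G)$ times the typical small-cluster size $O(\log n/(1-q)^2)$. The $\log^\gamma$ factor disappears inside the outer logarithm because $\log(\log^\gamma n\cdot x)=O(\log x)$ when $x\ge\log n$. Running region growing on this reduced multicut instance then gives an $O(\log k)=O(\log(\log n/(1-q)^2))$ approximation. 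Step four: lift back, charging the exactly recovered clusters zero extra cost relative to $\opt(G')$.

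The main obstacle is step two/three. The difficulty is proving that the residual instance has only polylogarithmically many demand pairs, with high probability, and that contracting the recovered clusters loses at most a constant factor against $\opt(G')$ rather than against $\opt(G)$. Those two optima may differ since $\opt(G')$ can be much smaller. I would first try to argue directly in $G'$: any vertex set that the optimum of $G'$ separates from a recovered large cluster must pay $\Omega((1-q)|C|)$ edges, which exceeds $\opt(G')$ once $|C|\gtrsim\log^\gamma n/(1-q)$, so contraction is safe. If this charging fails, the fallback is to only prune demand pairs whose endpoints lie in different positive components, which is always lossless, and bound the component sizes via the concentration of random subgraph connectivity.
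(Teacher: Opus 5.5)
Your outline has the same skeleton as the paper's proof: threshold common positive neighbourhoods in $G'$ to recover the large optimal clusters of $G$, show they are clusters of every optimum of $G'$, and pay a logarithmic factor only on a polylogarithmic residual. Counting demand pairs instead of residual vertices changes nothing, since $k\le |R|^2$. However, the two steps that carry the proof are left as expectations, and the recovery step is set at the wrong scale. With size cut-off $C\log n/(1-q)^2$, exact recovery fails once $\gamma>1$. A vertex with $m\le \OPT(G)$ positive disagreement edges into another optimal cluster $B$ shares $m$ common positive neighbours with every vertex of $B$. So cross-cluster pairs can have $\Theta(\log^\gamma n)$ common positive neighbours, while intra-cluster pairs of a cluster of size $C\log n/(1-q)^2$ have only about $C\log n$ in $G'$. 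Both thresholds must scale with $\log^\gamma n$. The paper adds a similarity edge when $|N^+_{G'}(u)\cap N^+_{G'}(v)|\ge 48c\log^\gamma n$ and keeps components of size at least $\theta_{\mathrm{size}}=100c\log^\gamma n/(1-q)^2$. One direction is then deterministic. A pair from different optimal clusters of $G$ with that many common positive neighbours in $G'\subseteq G$ would force that many distinct disagreements, exceeding $\OPT(G)<c\log^\gamma n$. Hence the similarity components refine every optimal clustering of $G$. The other direction is a Chernoff bound on $\sum_w X_{uw}X_{vw}$ over the at least $|T_i|-2-2\OPT(G)$ common positive neighbours in $G$ of $u,v\in T_i$.

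The issue you flag, losing only a constant against $\OPT(G')$ rather than $\OPT(G)$, disappears once you show that every optimum of $G'$ contains each large $T_i$ exactly. That needs two high-probability properties, and your sketch gives at most the first.
\begin{itemize}
\item \emph{Splitting $T_i$ is ruled out.} For $u,v\in T_i$ on different sides, each common positive $G'$-neighbour yields a distinct disagreement, so the cost is at least $48c\log^\gamma n>\OPT(G)\ge\OPT(G')$. This needs only a union bound over pairs, whereas your cut bound $\Omega((1-q)|C|)$ would have to hold simultaneously over all data-dependent splits.
\item \emph{Absorbing an outside vertex $v$ is ruled out.} Completeness of $G$ leaves $v$ at most $\OPT(G)$ positive, hence at least $|T_i|-\OPT(G)\ge 99c\log^\gamma n/(1-q)^2$ negative, edges into $T_i$. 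A Chernoff bound then gives $|N^-_{G'}(v)\cap T_i|\ge 48c\log^\gamma n>\OPT(G')$.
\end{itemize}
Consequently, edges touching $\bigcup_i T_i$ cost exactly the same in the output and in the optimum.

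Finally, the residual bound is a plain count, not a random-connectivity argument. Optimal clusters of $G$ with no disagreement stay disagreement-free in $G'$, so their pieces are removed as disagreement-free components of $G'[E'^+]$ (your lossless pruning). Every remaining residual vertex therefore lies in an optimal cluster of $G$ of size below $\theta_{\mathrm{size}}$ that is incident to a disagreement. There are at most $2\OPT(G)$ such clusters, so $|R|\le 2\OPT(G)\,\theta_{\mathrm{size}}=O(c^2\log^{2\gamma}n/(1-q)^2)$. Running the algorithm of \citet{demaine2006correlation} directly on $G'[R]$ then gives $O(\log|R|)$, which is $O(\log(\log n/(1-q)^2))$ for constant $\gamma$; no multicut reduction is needed.
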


This is our main technical contribution. For constant $q$, the approximation ratio is significantly better than the guarantee of $O(\log n)$ for general graphs.
We prove~\Cref{our_result_low-cost-direct_} by observing that in this setting, every optimal solution for $G$ or $G'$ consists of only a few clusters with disagreements. 
Additionally, we show that we can accurately identify large optimal clusters (size $polylog(n)$) of $G'$ with high probability. In fact, we show that these large clusters are, with high probability, part of the optimal solution for both $G$ and $G'$. This implies that sufficiently large optimal clusters of $G$ can be recovered from $G'$ with high probability.
After finding and removing disagreement-free clusters and large clusters, we will be left with a smaller instance with only $n_s= polylog(n)$ vertices on which we can run the algorithm of \citet{demaine2006correlation} for general graphs to get an $O(\log (n_s)) = O(\log \log n)$ approximation for $G'$.

We believe that the assumption in~\Cref{our_result_low-cost-direct_} that OPT$(G)$ is small is a natural setting. As will be outlined in~\Cref{sec: relatedwork}, the algorithmic task of recovering a perfect ground truth clustering, that is, an instance with cost zero, after noisy perturbations has received much attention in the literature. The setting of~\Cref{our_result_low-cost-direct_} can be considered as a variant of this task, where the goal is to (approximately) recover a slightly imperfect ground truth clustering, that is, an instance with small cost $O(\log n)$, after noisy perturbations in the form of edge deletions.






We can also obtain the following result for the full range of $\opt(G)$ if we are given $G$ and the realized $G'$.

\begin{restatable}{theorem}{thmBothGraphs}
    \label{thm:l1:both:graphs}
    Given $G$ and $G' \sim \mathcal{S}_{q}(G)$, there exists a polynomial time algorithm that computes an  $O(\max\{\nicefrac{1}{(1-q)^2}, \log\log n + \log(\nicefrac{1}{1-q})\})$-approximation for $G'$ with high probability.
\end{restatable}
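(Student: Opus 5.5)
The plan is a case split on whether $\OPT(G)$ is large or small relative to a threshold $T = c\log n/(1-q)^{k}$ for a suitable constant $c$ and small exponent $k$. Since we are given both $G$ and $G'$, we can compute a constant-factor approximation $\mathcal{C}_G$ for the complete graph $G$ (for instance, the $1.485$-approximation of \citet{cao2024}). Its cost estimates $\OPT(G)$ up to a constant, which lets the algorithm decide which regime it is in.

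\textbf{Large regime, $\OPT(G) = \Omega(\log n)$.} Here we use the high-probability lower bound stated before \Cref{our_result_low-cost-direct_}: when $\OPT(G)\in\Omega(\log n)$, with high probability $\OPT(G') \ge \Omega((1-q)^3\,\OPT(G))$. The upper bound is a concentration argument. Every edge of $G$ that $\mathcal{C}_G$ places in disagreement survives independently with probability $1-q$. Hence the cost of $\mathcal{C}_G$ on $G'$ has expectation $(1-q)\,\mathrm{cost}_G(\mathcal{C}_G) \le (1-q)\,\alpha\,\OPT(G)$. Because this expectation is $\Omega(\log n)$ after adjusting the threshold, a Chernoff bound shows that with high probability the cost is at most $2(1-q)\alpha\OPT(G)$. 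Combining the two bounds gives a ratio of $O(\alpha/(1-q)^2)=O(1/(1-q)^2)$, the first term of the maximum. Note that $\mathcal{C}_G$ is a fixed clustering chosen before sampling, so no union bound over clusterings is needed. The only care required is to choose the threshold so that both the lower-bound lemma and the Chernoff bound apply. This may require $T$ to carry a $1/(1-q)$ factor so that the expectation $(1-q)\OPT(G)$ is still $\Omega(\log n)$.

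\textbf{Small regime, $\OPT(G) \le T$.} Here we follow the strategy of \Cref{our_result_low-cost-direct_}, but we exploit access to $G$ to make it simpler and parameter-free. Since $\OPT(G)\le T$, an optimal (or near-optimal) clustering of $G$ has at most $O(T)$ vertices incident to disagreements. We identify "clean" structure: clusters of $\mathcal{C}_G$ with no disagreements are also disagreement-free in $G'$ (deletion cannot create disagreements), and they are kept as is. The harder step is the large clusters. Using the result from the proof of \Cref{our_result_low-cost-direct_}, clusters of size $\mathrm{polylog}(n)/(1-q)^{O(1)}$ are, with high probability, part of an optimal solution for both $G$ and $G'$, so they can be fixed. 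This step is the main obstacle. We must show that a vertex in a large cluster has, with high probability, enough surviving positive edges into its cluster and few enough cross edges that moving it cannot help in $G'$. We must also check that all of this holds for the full range $T=\Theta(\log n/(1-q)^k)$ rather than only $O(\log^\gamma n)$ for constant $q$. I would handle it by a Chernoff bound per vertex–cluster pair plus a union bound, with cluster size threshold $s = \Theta(T\log n/(1-q))$.

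After removing these clusters, the remaining vertices are those in small clusters that touch disagreements. There are at most $O(T)$ such clusters, each of size at most $s$, leaving $n_s = \mathrm{poly}(\log n/(1-q))$ vertices. On the induced subgraph of $G'$ we run the $O(\log n_s)$ algorithm of \citet{demaine2006correlation}. This gives $O(\log\log n + \log(1/(1-q)))$. Finally, we argue that the optimum of $G'$ decomposes, so that $\OPT(G')$ is at least the optimum of the residual instance: the fixed clusters agree with some optimal solution of $G'$. Taking the maximum over the two regimes yields the claimed bound.
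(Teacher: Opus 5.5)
Your plan follows the paper's proof. There is a threshold on $\OPT(G)$ of order $\log n/(1-q)^{O(1)}$. Above it, a constant-factor clustering of $G$ is reused on $G'$, with Chernoff bounds on its surviving disagreements and on the surviving edge-disjoint bad triangles (\Cref{lem:opt:high:prob}). Below it, trivial and large clusters are fixed and \citet{demaine2006correlation} is run on a residual of size $\mathrm{poly}(\log n/(1-q))$ (\Cref{low-cost-direct}).

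Three smaller differences from the paper:
\begin{itemize}
\item The paper does not estimate $\OPT(G)$. It computes both clusterings and returns the one cheaper on $G'$, which avoids tracking the factor-$\alpha$ overlap between regimes.
\item No extra work is needed for ``the full range of $T$''. \Cref{low-cost-direct} holds for arbitrary $c\ge 1$, so it is invoked with $\gamma=1$, $c=216/(1-q)^3$, giving $O(\log(c^2\log^2 n/(1-q)^2))=O(\log\log n+\log\frac{1}{1-q})$.
\item If, as your write-up suggests, you read the large clusters off $\mathcal{C}_G$ instead of using the similarity graph of \Cref{alg:low_cost}, that is a fine simplification given access to $G$. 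But the paper's lemmas concern an optimal clustering of $G$, which you cannot compute. You must add that clusters of $\mathcal{C}_G$ of size much larger than $\alpha T$ coincide with the large clusters of every optimal clustering of $G$. This follows from the same counting argument, with $\mathrm{cost}_G(\mathcal{C}_G)\le\alpha T$ in place of $\OPT(G)$.
\end{itemize}

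Three of your concrete steps need correcting.

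\begin{itemize}
\item $k=3$ is forced, not $k=1$. The high-probability lower bound on $\OPT(G')$ is a Chernoff bound on the surviving triangles of the edge-disjoint family $T(G)$. Their expected number is $(1-q)^3\tau(G)\ge(1-q)^3\OPT(G)/9$, and this must be $\Omega(\log n)$. The single $1/(1-q)$ factor you mention only handles the upper-bound side.
\item Most importantly, your plan for the large clusters does not prove what is needed. Per-vertex bounds showing that no single vertex gains by moving do not show that the cluster belongs to an optimal clustering of $G'$. Such a clustering could split it into several large pieces, and no single-vertex exchange argument excludes that. The missing idea (\Cref{lem:l1:preserve:large:clusters}, \Cref{cor:preserve:large:clusters}) is pairwise and uses the global budget $\OPT(G')\le\OPT(G)$. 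With high probability every pair $u,v$ of the cluster has $|N^+_{G'}(u)\cap N^+_{G'}(v)|>\OPT(G)\ge\OPT(G')$. Any clustering separating $u$ and $v$ pays one disagreement per common neighbour, so it cannot be optimal. Likewise, every outside vertex has more than $\OPT(G')$ surviving negative edges into the cluster.
\item Common neighbours survive with probability $(1-q)^2$, so the size threshold must be $s=\Omega(T/(1-q)^2)$, as in the paper's $\theta_{\mathrm{size}}=100c\log^\gamma n/(1-q)^2$. Your $s=\Theta(T\log n/(1-q))$ is too small when $1-q=o(1/\log n)$. You could patch that regime by noting that there $1/(1-q)^2\ge\log n$, so \citet{demaine2006correlation} alone already meets the bound.
\end{itemize}
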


\paragraph{Experimental results.} We complement our theoretical findings with
empirical results, on random deletion instances obtained from the
Facebook ego-network dataset and on  planted-clique instances based on the stochastic block model. 

Our experiments for min-max correlation clustering, suggest that~\Cref{lemma:min_max_correlation_1} is predictive of practice, as our empirical approximations are at least as good as our theoretical guarantee of $O\!\left(\nicefrac{\log n}{1-q}\right)$. 

For the min-disagreement objective, we study the empirical approximation ratio of \textsc{Pivot}~\cite{ACN-pivot} when run on graph $G'$. Our theoretical results show that the clustering computed by \textsc{Pivot} for the original complete graph $G$ yields an expected $O(\nicefrac{\alpha}{(1-q)^2})$-approximation for $G'$. The empirical results suggest that this guarantee may also hold for the clustering computed by \textsc{Pivot} on $G'$, which we do not show theoretically.

\ifarxiv
Proofs are given alongside their statements below; further details on our experiments (including source code) are provided in the supplementary materials.
\else
Omitted proofs and details on our experiments (including source code) are provided in the supplementary materials.
\fi

\subsection{Related work}\label{sec: relatedwork}
To the best of our knowledge, for the min-max objective, there are no known special classes of non-complete graphs that admit approximation algorithms better than $O(\sqrt{n})$. 
We note the min-max objective is a measure of fairness, and other fairness-motivated and local objectives have been considered \cite{friggstad2021fair, davies2023one}, as well as fairness constraints \cite{ahmadi2020fair, ahmadian2020fair}.
There is also  work on general graphs for the min-disagreement objective with fairness constraints \cite{schwartz2022fair}.

For the min-disagreement objective, \citet{demaine2006correlation} showed an $O(r^3)$ approximation for $K_{r,r}$-minor-free graphs, and \citet{ParsaeiMajd26} identified a class of graphs for which there is a 2-approximation by defining a small set of forbidden signed subgraphs.

Other work on the min-disagreement objective on subclasses of general graphs holds only for classes of graphs arising from a noisy and partially observed planted partition model---these settings model the problem of finding an underlying ground truth clustering in the presence of random noise. 
\citet{ChenJSX14} study under what conditions one can find an optimal clustering on an Erd{\"o}s-Renyi random graph of observed edges, where the signs of the observed edges are according to a planted partition model, except the signs are also corrupted i.i.d. 
An even more general model is studied by \citet{makarychev2015correlation}; here, an adversary chooses an arbitrary (unsigned) graph and a ground truth clustering on that graph, then each edge in the graph is sampled i.i.d., and the adversary can flip the signs of any of these sampled edges.
These last two models setting were inspired by work on complete graphs arising from a noisy planted partition model \cite{Ben-DorSY99, bansal2004correlation,mathieu2010correlation}
reconstruct the ground truth clustering as well as approximation algorithms.
Similar models (though less relevant to our study) have remained popular over the last decade \cite{ChierichettiPRT22, WangWS22}.



\subsection{Preliminaries}
Recall $G = (V, E)$ denotes a
signed graph on $n = |V|$ vertices. 
We use $G' = (V, E' )$ to refer to a subgraph of $G$ that is generated from $G$ by independently deleting each edge $e \in E $ with probability $q$, where we formally write $G' \sim \mathcal{S}_q(G)$. Note when defining such $G'$, the graph $G$ is always complete. We let $E^+$ denote the positive edges of $G$ and $E^-$ the negative, so $E = E^+ \cup E^-$, and similarly $E' = E'^+ \cup E'^-$.

For $i \in \{+,-\}$ and $v \in V$, define the \emph{positive and negative neighborhood} of $v$ in $G$ as $N^i_G(v) := \{ w \in V \mid \{w,v\} \in E^i\}$. Analogously, we use $N^+_{G'}(v)$ and $N^-_{G'}(v)$ to refer to the neighborhoods in $G'$.
We use $\mathcal{C}$ to denote a clustering, which is a partition of the vertex set $V$. For a clustering
$\mathcal{C}$ and a vertex $v \in V$, 
let $[v]_{\mathcal{C}}$ be the cluster of $v$ in $\mathcal{C}$.
Then we define
\[
    \text{cost}_G(\mathcal{C},v) = |N^+_G(v) \setminus [v]_{\mathcal{C}}| + |N^-_G(v) \cap [v]_{\mathcal{C}}|,
\]
as the number of disagreements incident to $v$ in $G$ with respect to $\mathcal{C}$. We use all of the same notation for the analogous terms in $G'$. If the graph is clear from context, we omit the subscript in the neighborhood notation or in the cost notation.

If an edge $e \in E$ satisfies $e \in E' := E'^+ \cup E'^-$, then we say that $e$ \emph{survives}. We write $X_e \sim \mathrm{Ber}(1-q)$ for the indicator that edge $e$ survives. 


\section{Min-disagreement Objective} 
\label{sec:min:disagreement}

In this section, we prove our results for the objective of minimizing the total number of disagreements, Theorems \ref{thm:l1:both:graphs} and \ref{low-cost-direct}. We let $\text{cost}_{G'}(\mathcal{C}) = \nicefrac12 \cdot \sum_{v \in V} \text{cost}_{G'}(\mathcal{C},v),$ so the cost of an optimal clustering on $G'$ is denoted by
$
    \text{OPT}(G') = \min_{\mathcal{C}} \text{cost}_{G'}(\mathcal{C}),$ and analogously for $G$.

We start by proving properties of the (expected) optimum and providing approximation results that hold in expectation. Then, we prove upper bounds holding with high probability.

\subsection{Expected upper bounds}
\label{sec:exp:upper:bounds}

For $G' \sim \mathcal{S}_q(G)$, we derive some useful properties of $\ex[\opt(G')]$ as a warm-up. In particular, we compare $\ex[\opt(G')]$  to $\opt(G)$. 
This immediately implies an expected approximation factor for the case where $G$ is known to the algorithm, and will later be useful for proving upper bounds with high probability.
To this end, we rely on well-established connections between  $\opt(G)$ and the number of \emph{bad triangles} in $G$. 

\begin{definition}[Bad triangles]
    A triangle $T$ in $G=(V,E^+ \cup E^-)$ is \emph{bad} if $|T \cap E^+| = 2$ and $|T \cap E^-| = 1$. We use $T(G)$ to refer to the maximum set of edge disjoint bad triangles in $G$, and define $\tau(G) := |T(G)|$.
\end{definition}

It is well-known that $\tau(G)$ lower bounds $\opt(G)$, even if $G$ is not complete~\cite{bansal2004correlation}. If $G$ is complete, then we even get $\opt(G) \in \Theta(\tau(G))$ by exploiting existing constant factor approximations that bound their costs against a triangle-based LP.

\begin{restatable}{lemma}{lemOptTriangleBound}
    \label{lem:opt:triangle:bound}
    For every signed (not necessarily complete) graph $G$, we have $\tau(G) \le \opt(G)$.
    If $G$ is complete, then $\opt(G) \le 9 \cdot \tau(G)$.
\end{restatable}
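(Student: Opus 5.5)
The lower bound $\tau(G) \le \opt(G)$ is a direct charging argument. Fix any clustering $\mathcal{C}$ and any bad triangle $T=\{u,v,w\}$ with $\{u,v\},\{v,w\}\in E^+$ and $\{u,w\}\in E^-$. We check that at least one edge of $T$ is in disagreement. If $u,v,w$ all lie in one cluster, the negative edge $\{u,w\}$ disagrees. Otherwise some positive edge of $T$ must cross clusters: if both $\{u,v\}$ and $\{v,w\}$ were inside clusters, then $u,v,w$ would share a cluster, which we excluded. Now take the edge-disjoint family $T(G)$. Each triangle in it contains at least one disagreeing edge, and these edges are distinct across triangles. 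Hence $\text{cost}_G(\mathcal{C}) \ge \tau(G)$ for every $\mathcal{C}$, in particular for an optimal one. This argument never uses completeness.

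For the upper bound on complete graphs, we route through the standard triangle-covering LP. Introduce a variable $y_e \ge 0$ for each edge and minimize $\sum_e y_e$ subject to $\sum_{e\in T} y_e \ge 1$ for every bad triangle $T$. Every clustering's disagreement indicator vector is feasible, by the first paragraph. The dual packs bad triangles fractionally: maximize $\sum_T z_T$ subject to $\sum_{T \ni e} z_T \le 1$ and $z\ge 0$. We then invoke the known analysis of \textsc{Pivot} by \citet{ACN-pivot}. It shows that the expected cost of \textsc{Pivot} on a complete graph is at most $3\sum_T z_T$ for a feasible fractional packing $z$, namely $z_T$ equal to the probability that $T$ is "charged" (one of its vertices is chosen as pivot while all three are still unclustered), divided by $3$. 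Therefore $\opt(G) \le \ex[\text{cost}(\textsc{Pivot})] \le 3\cdot \nu^*(G)$, where $\nu^*$ is the fractional bad-triangle packing number.

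It remains to compare the fractional packing $\nu^*$ with the integral edge-disjoint packing $\tau(G)$, losing a factor of $3$. Let $T(G)$ be a maximum, hence maximal, edge-disjoint packing, and let $F$ be its edge set, so $|F| = 3\tau(G)$. By maximality, every bad triangle shares an edge with some triangle of $T(G)$, so $F$ is an integral cover of all bad triangles. By weak duality, $\nu^* \le |F| = 3\tau(G)$. Combining the two steps gives $\opt(G) \le 3\cdot 3\tau(G) = 9\tau(G)$.

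The main obstacle is stating the \textsc{Pivot} bound in exactly the needed form: an expected cost at most $3$ times a feasible fractional packing of bad triangles. This is precisely the ACN argument: each disagreement is attributed to the bad triangle formed with the pivot, and the per-edge charge probabilities sum to at most $1$. I would quote it directly and not re-derive it. The rest consists of weak duality and the maximality-implies-cover observation.
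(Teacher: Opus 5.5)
Your proposal is correct and follows the paper's route. The paper likewise bounds $\opt(G)$ by $3$ times the bad-triangle covering LP via the \textsc{Pivot} analysis, and bounds that LP by $3\tau(G)$ using the edges of a greedily built (hence maximal) edge-disjoint packing as an integral cover. The differences are cosmetic: you prove $\tau(G)\le\opt(G)$ directly rather than citing \citet{bansal2004correlation}, and you state the \textsc{Pivot} bound through the dual fractional packing $z_T = p_T/3$ and weak duality, which is the form the ACN argument naturally gives.
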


\ifarxiv
\begin{proof}
    Inequality $\tau(G) \le \opt(G)$ has, for example, been shown by~\citet{bansal2004correlation}. Hence, it remains to show that for complete $G$, we have $\opt(G) \le 9 \cdot \tau(G)$.

    We argue using the following LP-relaxation~\cite{ACN-pivot}, which uses $H$ to refer to the set of all (not only edge-disjoint) bad triangles.
    \begin{equation}
    \label{lower-bound-lp}
    \begin{aligned}
    \min \quad & \sum_{ab \in E} x_{ab} \\
    \text{s.t.} \quad
    & x_{ab} + x_{bc} + x_{ca} \ge 1, && \forall \{a,b,c\} \in H \quad , \\
    & x_{ab} \ge 0, && \forall ab \in E .
    \end{aligned}
    \end{equation}

    Let $\text{Cost(LP)}$ denote the cost of an optimal solution to LP~\eqref{lower-bound-lp}. We prove that (i) $\text{Cost(LP)} \leq 3 \cdot \tau(G)$ and (ii) $\opt(G) \le 3 \cdot \text{Cost}(LP)$. Combined, (i) and (ii) imply $\opt(G) \le 9 \cdot \tau(G)$.
    \begin{itemize}
        \item \textbf{Proof of (i)}: We prove $\text{Cost(LP)} \leq 3 \cdot \tau(G)$ by constructing a feasible LP solution $x'$ with objective value $3 \cdot \tau(G)$. Initialize $x'_{uv} = 0$ for all $uv \in E$.
        While there exists a violated bad-triangle constraint, pick such a triangle $\{a,b,c\}$
        and set $x'_{ab} = x'_{bc} = x'_{ca} = 1$. Let $T'$ be the triangles selected. Note that $T'$ is an edge-disjoint set of bad triangles and, therefore, $|T'| \le \tau(G)$. The solution $x'$ is feasible by construction. Hence, $ \text{Cost(LP)} \leq \text{Cost($x'$)} =3\cdot |T'| \leq 3 \cdot \tau(G)$.
        \item \textbf{Proof of (ii)}: The well-known analysis of the Pivot algorithm proves an approximation factor of $3$ by comparing against $\text{Cost}(LP)$. Hence, $\opt(G) \le 3 \cdot \text{Cost}(LP)$. \qedhere
    \end{itemize}
\end{proof}
\fi

The first part of~\Cref{lem:opt:triangle:bound} implies $\ex[\opt(G')] \ge \ex[\tau(G')]$. Furthermore, we can observe that $\ex[\tau(G')] \ge (1-q)^3 \tau(G)$ as each bad triangle in $T(G)$ is also present in $G'$ with an independent probability of $(1-q)^3$. Thus, using the second part of~\Cref{lem:opt:triangle:bound}, we get $\ex[\opt(G')] \ge \nicefrac{(1-q)^3}{9} \cdot \opt(G)$. 
An immediate corollary of this inequality is that every $\alpha$-approximate clustering for the complete graph $G$ is, in expectation, a $\nicefrac{9\cdot \alpha}{(1-q)^3}$-approximation for $G'$. With the following lemma, we prove a stronger bound.

 \begin{restatable}{lemma}{lemExpectedApprox} \label{lem:sparsification_G}
     Given access only to $G$, and a multiplicative $\alpha$-approximation clustering $\mathcal{C}$ for $G$, we have $\ex[\text{cost}_{G'}(\mathcal{C})] \le \frac{9 \cdot \alpha}{(1-q)^2} \cdot \ex[\opt(G')]$.
 \end{restatable}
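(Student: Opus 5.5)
Since $\mathcal{C}$ is fixed before the deletions (it depends only on $G$), its expected cost on $G'$ is easy to compute exactly. Every edge $e$ in disagreement with $\mathcal{C}$ in $G'$ is an edge of $G$ that disagrees with $\mathcal{C}$ in $G$ and survives. By linearity of expectation,
\[
\ex[\text{cost}_{G'}(\mathcal{C})] = \sum_{e \text{ disagreeing in } G} \Pr[X_e = 1] = (1-q)\cdot \text{cost}_G(\mathcal{C}) \le (1-q)\,\alpha \cdot \opt(G).
\]
So the plan is to save one factor of $(1-q)$ here, compared with the naive corollary before the lemma. It then remains to show $\ex[\opt(G')] \ge \frac{(1-q)^3}{9}\opt(G)$, which is exactly the chain of inequalities given just before the lemma.

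For that lower bound I will spell out the chain carefully.
\begin{itemize}
\item By the first part of \Cref{lem:opt:triangle:bound} applied to the (non-complete) graph $G'$, we have $\opt(G') \ge \tau(G')$ pointwise.
\item Fix a maximum edge-disjoint family $T(G)$ of bad triangles in $G$. The triangles of $T(G)$ whose three edges all survive form an edge-disjoint family of bad triangles in $G'$, since signs are preserved. Hence $\tau(G') \ge \sum_{t \in T(G)} \mathbf{1}[t \text{ survives}]$.
\item Each $t \in T(G)$ survives with probability $(1-q)^3$, so taking expectations gives $\ex[\tau(G')] \ge (1-q)^3 \tau(G)$.
\item The second part of \Cref{lem:opt:triangle:bound} for complete $G$ gives $\tau(G) \ge \opt(G)/9$.
\end{itemize}

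Combining the two parts,
\[
\ex[\text{cost}_{G'}(\mathcal{C})] \le (1-q)\alpha\,\opt(G) \le (1-q)\alpha \cdot \frac{9}{(1-q)^3}\,\ex[\opt(G')] = \frac{9\alpha}{(1-q)^2}\,\ex[\opt(G')].
\]
The only subtle point is the first step: the exact expectation computation needs $\mathcal{C}$ to be independent of the deletion randomness. This holds because the algorithm only has access to $G$. No other step requires real work.
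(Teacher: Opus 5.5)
Your proposal is correct and follows the paper's proof essentially step for step: the paper also combines $\ex[\mathrm{cost}_{G'}(\mathcal{C})] = (1-q)\,\mathrm{cost}_G(\mathcal{C}) \le (1-q)\,\alpha\,\OPT(G)$ with the bound $\ex[\OPT(G')] \ge \frac{(1-q)^3}{9}\,\OPT(G)$, which it gets from the surviving members of a maximum edge-disjoint family of bad triangles and both parts of \Cref{lem:opt:triangle:bound}. Your remark that this lower bound needs only linearity of expectation, and not independence of the triangle indicators, is also accurate.
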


\ifarxiv
\begin{proof}

We establish three inequalities.

\medskip\noindent\textbf{Inequality 1: $\ex[\OPT(G')] \geq \nicefrac{(1-q)^3}{9} \cdot \opt(G).$}
By~\Cref{lem:opt:triangle:bound}, $G$ has a set $T(G)$ of edge-disjoint bad triangles with $|T(G)| = \tau(G) \geq \nicefrac{\opt(G)}{9}$.
For each $t \in T(G)$, let $Y_t = \mathbf{1}[t \subseteq E']$, where $E'$ is the set of edges in $G'$.
The three edges of $t$ are distinct, so $\ex[Y_t] = (1-q)^3$.
Edge-disjointness of $T(G)$ makes the $Y_t$'s mutually independent.
Define $Y = \sum_{t \in T} Y_t$. Then,
\[
  \ex[Y] = (1-q)^3 \tau(G) \geq \nicefrac{(1-q)^3}{9} \cdot \opt(G).
\]
Surviving triangles, i.e., triangles $t \in T(G)$ with $t \subseteq E'$, are still edge-disjoint in $G'$.
Hence,~\Cref{lem:opt:triangle:bound} implies $\opt(G') \ge Y$. Taking expectations:
\begin{equation}
  \ex[\OPT(G')] \geq E[Y] \geq \nicefrac{(1-q)^3}{9} \cdot \opt(G). \label{eq:copt-upper}
\end{equation}

\medskip\noindent\textbf{Inequality 2:  $\mathrm{cost}_G(\mathcal{C}) \;\le\; \alpha\cdot\opt(G).$}
Since $\mathcal{C}$ is an $\alpha$-approximation on $G$,
\begin{equation}\label{eq:alpha-approx}
    \mathrm{cost}_G(\mathcal{C}) \;\le\; \alpha\cdot \opt(G).
\end{equation}

\medskip\noindent\textbf{Inequality 3: $\ex[\mathrm{cost}_{G'}(\mathcal{C})] = (1-q)\cdot \mathrm{cost}_G(\mathcal{C})$.} Using the fact that for each edge $e$ of $G$ we have $e \in E'$ with probability $(1-q)$ yields
\begin{equation} \label{eq:cost-C-G'}
    \ex[\text{cost}(\mathcal{C},G')] = (1-q) \cdot \text{cost}_G(\mathcal{C})
\end{equation}

We conclude the proof by substituting \eqref{eq:alpha-approx} into \eqref{eq:cost-C-G'} and then
applying \eqref{eq:copt-upper}:
\begin{align*}
    \ex[\mathrm{cost}_{G'}(\mathcal{C})]
        &\;\le\; (1-q)\,\alpha\cdot \opt(G) \\
        &\;\le\; (1-q)\,\alpha \cdot \nicefrac{9}{(1-q)^3}\cdot \ex[\opt(G')] \\
        &\;=\;   \nicefrac{9\alpha}{(1-q)^2}\cdot \ex[\opt(G')].
\end{align*}
\end{proof}
\fi

\Cref{lem:sparsification_G} establishes that if we have access to $G$, then we can compute an expected $\nicefrac{9\alpha}{(1-q)^2}$-approximation for $G'$ by just computing an $\alpha$-approximation for $G$. With the next lemma, we show a similar property that holds with high probability provided that $\opt(G)$ is sufficiently large.

\begin{restatable}{lemma}{lemOptHighProb}
    \label{lem:opt:high:prob}
Let $\beta \in (0,1)$ and let $\mathcal{C}$ be an $\alpha$-approximation for $G$. Define $c = 54/[\beta^2(1-q)^3]$.
If $\opt(G) \geq c\log n$,  then
with probability at least $1-n^{-2.5}$, $G' \sim \mathcal{S}_q(G)$ has
$\text{cost}_{G'}(\mathcal{C}) \leq \frac{18\alpha}{(1-\beta)(1-q)^2} \cdot \OPT(G').$
\end{restatable}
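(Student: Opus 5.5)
We will prove high-probability versions of the three inequalities from the proof of \Cref{lem:sparsification_G}. Both $\opt(G')$ and $\text{cost}_{G'}(\mathcal{C})$ are sums of independent edge indicators (or dominate such sums), so Chernoff bounds apply. The hypothesis $\opt(G) \ge c\log n$ makes the relevant expectations large compared with $\log n$.

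\textbf{Lower bound on $\opt(G')$.} Fix a maximum edge-disjoint family $T(G)$ of bad triangles. By \Cref{lem:opt:triangle:bound}, $\tau(G) \ge \opt(G)/9$. As in the proof of \Cref{lem:sparsification_G}, let $Y=\sum_{t\in T(G)} Y_t$ count the triangles of $T(G)$ all of whose edges survive. The $Y_t$ are independent Bernoulli$((1-q)^3)$ variables because the triangles are edge-disjoint. Moreover $\opt(G') \ge \tau(G') \ge Y$. Let $\mu=\ex[Y]=(1-q)^3\tau(G)\ge (1-q)^3\opt(G)/9 \ge (1-q)^3 c\log n/9 = 6\log n/\beta^2$. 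The multiplicative Chernoff bound gives
\[
\Pr[Y<(1-\beta)\mu]\le \exp(-\beta^2\mu/2)\le \exp(-3\log n)=n^{-3}.
\]
Hence, with probability at least $1-n^{-3}$,
\[
\opt(G')\ge (1-\beta)(1-q)^3\opt(G)/9.
\]

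\textbf{Upper bound on $\text{cost}_{G'}(\mathcal{C})$.} The quantity $\text{cost}_{G'}(\mathcal{C})=\sum_{e \text{ disagreeing in } G} X_e$ is a sum of independent Bernoulli$(1-q)$ variables. Its mean is $(1-q)\,\text{cost}_G(\mathcal{C})\le (1-q)\alpha\opt(G)$. We want $\text{cost}_{G'}(\mathcal{C})\le 2(1-q)\alpha\opt(G)$; this factor $2$ accounts for the $18=2\cdot 9$ in the statement. Using the Chernoff bound with deviation $1$, i.e.\ $\Pr[X\ge 2M]\le \exp(-M/3)$ for any $M\ge \ex[X]$, which holds by monotonicity of the Chernoff bound in the mean, take $M=(1-q)\alpha\opt(G)\ge (1-q)c\log n$. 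Since $\alpha\ge 1$, $\beta<1$ and $q<1$, we have $(1-q)c \ge 54$, so $M\ge 54\log n$. The failure probability is therefore at most $n^{-18}$.

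\textbf{Combining.} Take a union bound over the two failure events. Each failure probability is polynomially small, so the total failure probability is at most $n^{-3}+n^{-18}\le n^{-2.5}$ for $n\ge 2$. On the complementary event,
\[
\text{cost}_{G'}(\mathcal{C})\le 2(1-q)\alpha\opt(G)\le 2(1-q)\alpha\cdot\frac{9}{(1-\beta)(1-q)^3}\opt(G')=\frac{18\alpha}{(1-\beta)(1-q)^2}\opt(G').
\]

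\textbf{Main obstacle.} The delicate step is the concentration of the cost when its mean could be small relative to $\log n$. We avoid this by using the bound $M=(1-q)\alpha\opt(G)$ rather than the true mean, together with the upper-tail Chernoff inequality stated for an upper bound on the mean. The constant $54$ in $c$ is what makes both tails polynomially small. One also has to be careful that the base of the logarithm matches the Chernoff exponent; taking the natural logarithm is the safe reading.
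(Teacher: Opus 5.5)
Your proof is correct and follows the paper's argument essentially step for step: the same edge-disjoint bad-triangle Chernoff bound with $\delta=\beta$ giving failure probability $n^{-3}$, an upper-tail Chernoff bound on the surviving disagreements giving $n^{-18}$, and the same union bound and final chain of inequalities. The only difference is minor. The paper applies Chernoff to the true mean $(1-q)\,\text{cost}_G(\mathcal{C})$ with $\delta=\beta(1-q)$, using $\text{cost}_G(\mathcal{C})\ge \mathrm{OPT}(G)\ge c\log n$, whereas you use $\delta=1$ against the upper bound $(1-q)\alpha\,\mathrm{OPT}(G)$ on the mean; because the true mean is already at least $(1-q)c\log n$, the ``small mean'' obstacle you flag never actually arises.
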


\ifarxiv
\begin{proof}
\noindent\textbf{Step 1: Lower bounding $\OPT(G')$ whp.}
By~\Cref{lem:opt:triangle:bound}, the maximum set of edge-disjoint bad triangles $T(G)$ of $G$ satisfies $|T(G)| = \tau(G) \geq \nicefrac{\opt(G)}{9}$.
Define $Y_t = \mathbf{1}[t \subseteq E']$, where $E'$ is the set of edges in $G'$. Since each edge $e \in t$ satisfies $e \in E'$ with an independent probability of $(1-q)$, we get $\ex[Y_t] = (1-q)^3$.
The $Y_t$'s for $t \in T(G)$ are mutually independent because the triangles in $T(G)$ are edge-disjoint. Define $Y = \sum_{t\in T} Y_t$. Then,
\begin{align*}
  \mu_Y := \ex[Y]& = (1-q)^3 \cdot \tau(G)
  \geq \nicefrac{(1-q)^3}{9}\cdot  \opt(G)
  \geq \nicefrac{c(1-q)^3}{9}\cdot \log n,
\end{align*}
where the penultimate inequality uses~\Cref{lem:opt:triangle:bound} and the last inequality uses the assumption that $\opt(G) \ge c \cdot \log n$.
Applying a Chernoff bound with $\delta = \beta$ yields
\begin{align*}
  \Pr[Y \leq (1-\beta)\mu_Y]
  &\leq \exp\!\left(-\nicefrac{\beta^2}{2}\cdot \mu_Y\right)\\
  &\leq \exp\!\left(-\nicefrac{\beta^2 c(1-q)^3}{18}\cdot \log n\right)\\
  &= n^{-\beta^2 c(1-q)^3/18}.
\end{align*}

By choice of $c = 54/[\beta^2(1-q)^3]$, we have $n^{-\beta^2 c(1-q)^3/18} = n^{-3}$. Thus, with probability at least $1-n^{-3}$,
\[
  \opt(G') \ge Y \geq (1-\beta)\mu_Y \geq \nicefrac{(1-\beta)(1-q)^3}{9} \cdot \opt(G),
\]
where the first inequality uses~\Cref{lem:opt:triangle:bound}.

\noindent\textbf{Step 2: Upper bounding $\text{cost}_{G'}(\mathcal{C})$ whp.}
Next, consider $\text{cost}_{G'}(\mathcal{C})$. Let $D$ denote the set of edges that incur disagreements in clustering $\mathcal{C}$ for graph $G$. That is, $\text{cost}_G(\mathcal{C}) = |D|$. Since $G'$ is a subgraph of $G$, we have $\text{cost}_{G'}(\mathcal{C}) = \sum_{e \in D} X_e$. Recall that $X_e$ is an indicator for the event that edge $e$ survives the deletion process. By definition of our model, we have $\ex[X_e] = 1-q$ and, thus, $\ex[\sum_{e \in D} X_e] = (1-q) |D|$. Define $X = \sum_{e \in D} X_e$ and let $\mu_X := \ex[X] = (1-q) |D|$.

Applying a Chernoff bound with $\delta = \beta \cdot (1-q)$ yields
\begin{align*}
    \Pr[X \ge (1+\beta(1-q))(1-q)|D|]
    &\le \exp\left(-\nicefrac{\beta^2(1-q)^2 }{3}\cdot \mu_X\right)
    = \exp\left(-\nicefrac{\beta^2(1-q)^3 }{3}\cdot |D|\right).
\end{align*}
Plugging in $c = 54/[\beta^2(1-q)^3]$ and $|D| \ge \opt(G) \ge c \log n$, and observing that $(1+\beta(1-q)) \le 2$, we get
\begin{align*}
\Pr[X \ge 2 \cdot (1-q) |D|]
&\le \Pr[X \ge (1+\beta(1-q))(1-q)|D|]\\
&\le \exp(-18 \log n) = n^{-18}.
\end{align*}
This implies that $\text{cost}_{G'}(\mathcal{C}) \le 2 \cdot (1-q) \cdot \text{cost}_G(\mathcal{C})$ holds with probability at least $1-n^{-18}$.

\noindent\textbf{Step 3: Combining both bounds.}
Applying a union bound to the event that $Y \le (1-\beta) \mu_Y$ or $X \ge 2 (1-q) |D|$ yields
\begin{align*}
    \Pr[Y \le (1-\beta) \mu_Y \lor X \ge 2 (1-q)|D|]
    &\le \Pr[Y \le (1-\beta) \mu_Y] + \Pr[X \ge 2(1-q)|D|]\\
    &\le n^{-3} + n^{-18}\\
    &\le n^{-2.5},
\end{align*}
where the last inequality assumes an instance with $n \ge 2$.
Hence, with probability at least $1-n^{-2.5}$, we have both
$\text{cost}_{G'}(\mathcal{C}) \le 2 \cdot (1-q) \cdot \text{cost}_{G}(\mathcal{C})$ and
 $\opt(G') \ge  \nicefrac{(1-\beta)(1-q)^3}{9} \cdot \opt(G)$.
Using that $\mathcal{C}$ is an $\alpha$-approximation  for $G$, we can conclude that the following chain of inequalities holds with probability at least $1-n^{-2.5}$,
\begin{align*}
     \text{cost}_{G'}(\mathcal{C}) \le (1-q) \cdot 2 \cdot \text{cost}_G(\mathcal{C})
     \le (1-q) \cdot 2 \cdot  \alpha \cdot \opt(G)
     \le \frac{18\alpha}{(1-\beta)(1-q)^2} \cdot \OPT(G').
\end{align*}
 \end{proof}
\fi

\subsection{Upper bounds with high probability}
\label{sec:min:whp:bounds}

In this section, we design algorithms that achieve improved approximation guarantees (compared to the general graph setting) with high probability. If the algorithm has access to both $G$ and $G'$, we prove the guarantee of~\Cref{thm:l1:both:graphs}, which we restate here for convenience.

\thmBothGraphs*

By~\Cref{lem:opt:high:prob}, every constant factor approximation for $G$ satisfies the theorem provided that $\opt(G)$ is sufficiently large. Hence, our strategy for proving~\Cref{thm:l1:both:graphs} is to design a dedicated algorithm for the case when $\opt(G)$ is small. We can then combine both cases by computing two clusterings, and returning the one with smaller cost. Notably, our algorithm for small $\opt(G)$ only needs $G'$ as input.

\begin{restatable}{theorem}{LowCostDirect}
\label{low-cost-direct}
    If $\OPT(G) < c \cdot \log^\gamma n$ for $c,\gamma \ge 1$, then there is a polynomial time algorithm that, given only $G' \sim \mathcal{S}_{q}(G)$, computes an $O(\log (\nicefrac{c^2}{(1-q)^2} \cdot  \log^{2\gamma} n))$-approximation for $G'$ with high probability. 
\end{restatable}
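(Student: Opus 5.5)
The plan is to reduce $G'$ to a small residual instance of size $n_s = O(\nicefrac{c^2}{(1-q)^2}\cdot\log^{2\gamma} n)$ and run the $O(\log n_s)$-approximation of \citet{demaine2006correlation} on it. Everything outside the residual instance will be clustered so that it provably agrees with some optimal clustering of $G'$.

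\textbf{Step 1 (structure of $G$).} Let $\mathcal{C}^*$ be an optimal clustering of $G$. Since $\opt(G) < c\log^\gamma n$, at most $2c\log^\gamma n$ vertices are incident to a disagreement edge of $\mathcal{C}^*$. Call these vertices \emph{dirty}. Every cluster of $\mathcal{C}^*$ without dirty vertices is a positive clique with all outgoing edges negative, so it is a connected component of $G^+$ that is a clique. The same holds in $G'$, so these disagreement-free clusters can be cut off from $G'$ exactly: such a cluster stays positively connected in $G'$ whp if it has size $\Omega(\log n/(1-q))$, and small ones can be handled by the residual instance anyway.

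\textbf{Step 2 (large clusters).} Consider the $O(c\log^\gamma n)$ clusters that contain dirty vertices. I would call such a cluster \emph{large} if its size exceeds a threshold $s = \Theta(\nicefrac{c}{1-q}\log^{\gamma} n)$, and aim to show whp that large clusters are recoverable from $G'$ alone. The natural tool is common neighborhoods. Two vertices in the same large cluster share about $(1-q)^2 s$ positive neighbors in $G'$, up to the dirty perturbation. Two vertices in different clusters share at most $O(c\log^\gamma n)$ positive neighbors, because every common positive neighbor contributes a disagreement. Chernoff plus a union bound over the $n^2$ pairs then separates the two cases, so thresholding common positive $G'$-neighborhoods recovers each large cluster up to its dirty vertices. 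Next I would argue by an exchange argument that some optimal clustering of $G'$ also contains these clusters: moving a vertex out of a large cluster, or merging in outside vertices, costs about $(1-q)s$ edges, which exceeds $\opt(G')\le\opt(G)<c\log^\gamma n$. This exchange argument is the \textbf{main obstacle}. One has to control how dirty vertices attach and show that $\opt(G')$ decomposes additively, meaning the boundary of a large cluster in $G'$ consists only of negative edges in every near-optimal solution. I would first attempt to prove that any optimal $G'$-clustering keeps all but $O(\opt(G'))$ vertices of each large cluster together. The dirty vertices would then be dealt with by fixing the core and assigning them to the residual instance.

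\textbf{Step 3 (residual instance and cost).} After removing disagreement-free clusters and large cluster cores, the remainder is the union of small dirty clusters, at most $O(c\log^\gamma n)$ of them with size at most $s$ each, together with the dirty vertices. This gives $n_s = O(c\log^\gamma n\cdot s) = O(\nicefrac{c^2}{1-q}\log^{2\gamma} n)$, which lies within the bound claimed in \Cref{low-cost-direct}. Run \citet{demaine2006correlation} on $G'$ induced on the residual vertices. Since the fixed parts agree with an optimal clustering of $G'$, the cut between them and the residual contributes zero beyond what optimal pays. The residual optimum is at most $\opt(G')$, so the total is $O(\log n_s)\cdot\opt(G')$, giving the claimed ratio. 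Union bounds over the failure events in Steps 1–2 give high probability, and every step is polynomial time.
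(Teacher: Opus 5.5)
You have the paper's architecture: discard disagreement-free clusters, recover the large optimal clusters of $G$ by thresholding common positive neighbourhoods in $G'$, and run \citet{demaine2006correlation} on the small remainder. However, the step you flag as the main obstacle is left unproved, and your proposed fallback would break the analysis. Suppose you fix only the ``core'' of a large cluster $T$ and send its dirty vertices to the residual instance. A dirty $u \in T$ still has roughly $(1-q)|T|$ positive $G'$-edges into the core, and all of them become disagreements. Since $|T|$ can be $\Theta(n)$, this is not $O(\log n_s)\cdot\mathrm{OPT}(G')$, and your Step 3 claim that the fixed part agrees with an optimal clustering of $G'$ becomes false.

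The missing observation is that dirty vertices need no special treatment. For $u \neq v$ in an optimal cluster $T$ of $G$, each $w \in T\setminus\{u,v\}$ that is not a common positive neighbour is witnessed by a distinct disagreement edge $uw$ or $vw$. Hence $|N^+_G(u)\cap N^+_G(v)| \ge |T|-2-\mathrm{OPT}(G)$. Likewise, each $v\notin T$ has at least $|T|-\mathrm{OPT}(G)$ negative edges into $T$. With the right threshold, a Chernoff and union bound give, with high probability, at least $48c\log^\gamma n$ common positive $G'$-neighbours for \emph{every} pair in $T$, and at least $48c\log^\gamma n$ negative $G'$-edges from every outside vertex into $T$.

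No exchange or additivity argument is then needed. A clustering of $G'$ that separates $u,v\in T$ pays one positive disagreement per common positive neighbour. One that places an outside vertex with $T$ pays for that vertex's negative edges into $T$. Both costs exceed $c\log^\gamma n > \mathrm{OPT}(G)\ge\mathrm{OPT}(G')$, so \emph{every} optimal clustering of $G'$ contains each large $T$ exactly. Your own cross-cluster bound shows that no threshold edge joins two different optimal clusters of $G$. Therefore the threshold components of size at least $s$ are exactly the large clusters, dirty vertices included.

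Two smaller errors also need fixing. First, $s=\Theta(\frac{c}{1-q}\log^\gamma n)$ is too small for the common-neighbour test. A same-cluster pair then has only about $(1-q)^2 s=\Theta((1-q)\,c\log^\gamma n)$ common positive $G'$-neighbours, which falls below the cross-cluster bound $c\log^\gamma n$ once $1-q$ is below a constant. You need $s=\Theta(\frac{c}{(1-q)^2}\log^\gamma n)$, as in the paper; this gives $n_s=O(\frac{c^2}{(1-q)^2}\log^{2\gamma}n)$, still within the claimed bound.

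Second, in Step 1 the small disagreement-free clusters cannot be deferred to the residual instance. There can be $\Theta(n)$ of them (e.g.\ when most vertices of $G$ lie in disjoint positive pairs), which destroys the bound on $n_s$ that Step 3 relies on. Connectivity in $G'$ is irrelevant here. Any component of $G'[E'^+]$ inside a disagreement-free cluster of $G$ has no internal negative edge and no outgoing positive edge, so it is itself disagreement-free in $G'$. It therefore suffices to remove all components of $G'[E'^+]$ without internal negative edges. This is safe because such a component can be split off from any clustering of $G'$ without increasing the cost.
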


We proceed by sketching some arguments for proving~\Cref{low-cost-direct}, which directly implies~\Cref{our_result_low-cost-direct_}. Let $\gamma,c \ge 1$ be hyperparameters, and assume that $\opt(G) < c \cdot \log^\gamma n$. The basic idea of our algorithm is to exploit structural properties of $G'$ (implied by the assumption $\opt(G) < c \log^\gamma n$) to recover certain clusters of the optimal solution for $G'$. After removing the vertices in those clusters, the remaining vertices $R$ will induce a subgraph $G'[R]$ with $n_r \in O(c^2 \cdot \log^{2\gamma} n)$ vertices. This allows us to cluster the remaining vertices by running the $O(\log n)$-approximation for general graphs~\cite{demaine2006correlation} to achieve an $O(\log n_r)= O(\log(c^2 \cdot \log^{2\gamma} n))$-approximation.

The full algorithm is formalized in~\Cref{alg:low_cost}. We continue by discussing the individual steps. First, call a cluster $C$ \emph{trivial} if $C$ incurs no disagreements in $G'$. That is, there is no $e \in E'^-$ with $e \subseteq C$ and no $e \in E'^+$ with $|e \cap C| = 1$. It is not hard to show that we can identify all trivial clusters in polynomial time. As a preprocessing step, our algorithm finds and removes all such clusters. This can be done by computing the connected components of $G'[E'^+]$ and checking whether they cause disagreements.

\begin{algorithm}[t]
\SetAlgoLined
\KwIn{Graph $G' = (V, E'^{+}\cup E'^-)$, deletion probability $q \in [0,1)$, and parameters $\gamma,c \ge 1$}
\KwOut{Clustering of $V$}
\SetKwProg{Proc}{Procedure}{}{}

\BlankLine

\textbf{Remove trivial clusters} (using \Cref{lem:trivial:cluster})
\BlankLine
\textbf{Step 1: Build Similarity Graph $H$} \\

Initialize empty graph $H = (V, \emptyset)$\;
\For{each pair $(u,v) \in V \times V$}{
    \lIf{$|N_{G'}^+(u) \cap N_{G'}^+(v)| \ge 48 c  \cdot \log^\gamma n$}{
        Add edge $\{u,v\}$ to $H$
    }
}

\BlankLine
\textbf{Step 2: Extract and Filter Components} \\
Find connected components $S_1, S_2, \dots, S_k$ of $H$\;
$C_{\text{p}} \gets \emptyset$, $R \gets \emptyset$\;
\For{each component $S_i$}{
    \lIf{$|S_i| \geq \frac{100  c \cdot \log^\gamma n}{(1-q)^2}$}{$C_{\text{p}} \gets C_{\text{p}} \cup \{S_i\}$
    }
    \lElse{
        $R \gets R \cup S_i$
    }
}

\BlankLine
\textbf{Step 3: Cluster Residual Set} \\
Run general graph algorithm on $G'[R]$\;
Let $C_R$ be the resulting clustering\;

\BlankLine
\Return $C_{\text{p}} \cup C_R$\;
\caption{Algorithm for small $\opt(G)$}
\label{alg:low_cost}
\end{algorithm}

\begin{restatable}{lemma}{lemTrivialCluster} \label{lem:trivial:cluster}
    All trivial clusters can be identified in polynomial time.
\end{restatable}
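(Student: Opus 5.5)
The plan is to show that every trivial cluster is exactly a connected component of the positive graph $G'^+ := (V, E'^+)$ that contains no internal negative edge of $G'$. The algorithm then computes these components and tests each one.

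The first step is the structural characterization. Let $C$ be a trivial cluster. No positive edge of $G'$ leaves $C$, so $C$ is a union of connected components of $G'^+$. For the lemma to identify clusters uniquely, we will call a trivial cluster \emph{minimal} if it is a single positive component. Such a component already has no outgoing positive edges by construction, so it is trivial exactly when no negative edge of $E'^-$ lies inside it. Conversely, any trivial cluster $C$ splits into its positive components $K_1,\dots,K_m$. Each $K_i$ has no internal negative edge, because such an edge would lie inside $C$. Each $K_i$ also has no outgoing positive edge. Hence every $K_i$ is itself trivial, and replacing $C$ by the $K_i$'s keeps zero cost on those vertices. So the relevant objects are exactly the positive components with no internal negative edge. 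In particular, removing these components loses nothing: clustering them as singletons-of-components costs zero.

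The second step is the procedure. Compute the connected components of $(V,E'^+)$ with BFS or DFS in $O(n+|E'|)$ time. Label each vertex by its component. Scan every edge of $E'^-$ and mark a component if both endpoints of the edge carry its label. The unmarked components are precisely the trivial clusters, by the characterization above. The total running time is $O(n+|E'|)=O(n^2)$, which is polynomial.

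The only subtle point is the precise meaning of ``all trivial clusters,'' since unions of trivial components are also trivial. I would handle it as above, by noting that every trivial cluster is a union of the identified components, so identifying those components suffices for the preprocessing step of \Cref{alg:low_cost}. I would also note that any optimal clustering of $G'$ may be assumed to contain these components as clusters. This follows because disagreements are additive over clusters and these components can be separated at zero cost: separating them cuts only negative edges, since no positive edge leaves them.
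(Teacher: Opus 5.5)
Your proposal is correct and follows essentially the same route as the paper: compute the connected components of $(V,E'^+)$ and keep exactly those that contain no negative edge of $G'$, which runs in polynomial (indeed $O(n+|E'|)$) time. You also note that, because $G'$ is incomplete, a trivial cluster can be a union of several such components, so the components are precisely the inclusion-minimal trivial clusters and removing them suffices; this usefully sharpens the paper's somewhat loose claim that every trivial cluster is itself a positive component.
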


\ifarxiv
\begin{proof}
    Construct the graph containing only the positive edges of $G'$. Any trivial
    cluster induces a connected component of this graph with no positive edge to
    the rest of the graph. Computing the connected components, after which each component can be verified by inspecting
    its incident edges. Hence, all trivial clusters can be identified and removed in
    polynomial time.
\end{proof}
\fi

The first and second step of our algorithm are motivated by the following auxiliary lemma, which exploits that $G'$ is sampled from $\mathcal{S}_q(G)$. 

\begin{restatable}{lemma}{lemPreserveLargeClusters}
    \label{lem:l1:preserve:large:clusters}
    Let $T_1,\ldots, T_\ell$ denote the clusters of size $|T_i| \ge \theta_{\mathrm{size}} := \frac{100  c \cdot \log^\gamma n}{(1-q)^2}$ in an optimal solution for $G$. For $G' \sim \mathcal{S}_q(G)$, with probability at least $1-n^{-3}$, the following two properties hold for \emph{all} $T_i$:
    \begin{enumerate}
        \item $\forall u,v \in T_i$ with $u \not= v\colon$ $|N^+_{G'}(u) \cap N^{+}_{G'}(v)| \geq 48 c \cdot  \log^\gamma n$
        \item $\forall v \in V \setminus T_i\colon$ $|N^-_{G'}(v) \cap T_i| \geq 48 c \cdot  \log^\gamma n$
    \end{enumerate}
\end{restatable}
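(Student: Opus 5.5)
The plan is to derive both properties from a deterministic structural fact about the optimal clustering $\mathcal{C}^*$ of $G$, namely that large clusters are nearly cliques of positive edges and nearly fully separated by negative edges. Then I will show that these counts survive the random deletions, using Chernoff bounds and a union bound.

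\textbf{Step 1: structure in $G$.} The total number of disagreements of $\mathcal{C}^*$ is $\opt(G) < c\log^\gamma n$. Fix a large cluster $T_i$ with $|T_i| \ge \theta_{\mathrm{size}}$ and distinct $u,v \in T_i$. Every $w \in T_i\setminus\{u,v\}$ that is not a common positive neighbour of $u$ and $v$ in $G$ has a negative edge $uw$ or $vw$ inside $T_i$. Each such edge is a disagreement, and each edge is charged by at most one such $w$. Hence
\[
|N^+_G(u)\cap N^+_G(v)\cap T_i| \ge |T_i| - 2 - c\log^\gamma n .
\]
Similarly, fix $v \notin T_i$. Every $w\in T_i$ with $vw \in E^+$ gives a positive edge crossing clusters, which is a disagreement. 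Since $G$ is complete, this yields
\[
|N^-_G(v)\cap T_i| \ge |T_i| - c\log^\gamma n .
\]
Because $|T_i| \ge 100c\log^\gamma n/(1-q)^2 \ge 100 c\log^\gamma n$, both counts are at least $\tfrac{97}{100}|T_i|$ for large $n$.

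\textbf{Step 2: survival.} For a common neighbour $w$, the pair of edges $uw, vw$ both survive with probability $(1-q)^2$. These events are independent over different $w$, since they involve disjoint edge pairs. So $|N^+_{G'}(u)\cap N^+_{G'}(v)|$ stochastically dominates a binomial with mean
\[
\mu_1 \ge 0.97(1-q)^2|T_i| \ge 97\,c\log^\gamma n .
\]
For Property 2, the count of surviving negative edges from $v$ into $T_i$ is a sum of independent $\mathrm{Ber}(1-q)$ variables. Its mean is
\[
\mu_2 \ge 0.97(1-q)|T_i| \ge 97\,c\log^\gamma n .
\]
Here I use that $(1-q)\ge (1-q)^2$, which is exactly why the threshold carries the factor $(1-q)^{-2}$.

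I then apply the lower-tail Chernoff bound with $\delta = 1/2$. Since $48 < \tfrac12\cdot 97$, the failure probability is at most
\[
\exp(-\mu/8) \le \exp\!\left(-\tfrac{97}{8}\,c\log^\gamma n\right) \le n^{-12},
\]
using $c,\gamma\ge 1$.

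\textbf{Step 3: union bound.} There are at most $n^2$ pairs $(u,v)$ and at most $n^2$ pairs $(v,T_i)$, since there are at most $n$ clusters. The total failure probability is therefore at most $2n^{2}\cdot n^{-12} \le n^{-3}$.

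The main obstacle is the bookkeeping in Step 1. I need to make sure each disagreement is charged correctly: a single negative edge $uw$ removes only one $w$ from the common-neighbourhood count. I also need to check that the constant losses, namely the $-2$ and the $c\log^\gamma n$ term, are absorbed by the factor $100$ so that the means stay comfortably above $2\cdot 48\,c\log^\gamma n$. Independence in Step 2 also needs care, but it holds because distinct $w$ involve disjoint edges.
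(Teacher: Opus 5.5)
Your proposal is correct and follows essentially the same route as the paper: a deterministic lower bound on $|N^+_G(u)\cap N^+_G(v)|$ and $|N^-_G(v)\cap T_i|$ from $\opt(G) < c\log^\gamma n$, then a lower-tail Chernoff bound with $\delta = 1/2$ giving $n^{-12}$ per event, then a union bound. The differences are cosmetic: you charge each bad $w$ injectively to one disagreement, which gives $|T_i| - 2 - c\log^\gamma n$ instead of the paper's looser $|T_i| - 2 - 2\opt(G)$, and you run the union bound directly over pairs $(v,T_i)$ rather than through the events $Y_v$.
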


\ifarxiv
\begin{proof}
    Our proof proceeds in three steps:
    \begin{enumerate}
        \item Fix vertices $u \not= v$ with $u,v \in T_i$ for some $i$. Let $S_{u,v}$ denote the event that $|N^+_{G'}(u) \cap N^{+}_{G'}(v)| < 48  c \cdot \log^\gamma n$. Below, we show that
        $
        \Pr[S_{u,v}] \le n^{-12}.
        $
        \item For a vertex $v$, define $Y_v$ to be the event that there exists a $T_i$ with $v \not\in T_i$ and $|N^-_{G'}(v) \cap T_i| <  48  c \cdot \log^\gamma n$.
        Below, we show that
        $
        \Pr[Y_{v}] \le n^{-11}.
        $
        \item Once, we have established the inequalities of the first two steps, we are ready to prove the statement.
        Let $Z$ denote the event that $S_{u,v} = 1$ for at least one pair $u \not= v$ with $u,v \in T_i$ for some $i$ or $Y_v=1$ for at least one $v$.
        Using a union bound and the inequalities of the first two steps yields
        \begin{align*}
            \Pr[Z] &\le \sum_{i \in [\ell]} \sum_{u,v \in T_i \colon u \not= v} \Pr[S_{u,v}] + \sum_{v \in V} \Pr[Y_v]\\
            &\le \sum_{i \in [\ell]} \sum_{u,v \in T_i \colon u \not= v} n^{-12} + \sum_{v \in V} n^{-11}.\\
            &
            \le n^{-3},
        \end{align*}
        where the second inequality uses the bounds of the first two steps.
        The probability that the properties of the lemma hold is $1-\Pr[Z] \ge 1 - n^{-3}$.
    \end{enumerate}

    \paragraph{Proof of Step 1.}
        Fix two vertices $u \not= v$ with $u,v \in T_i$ for some $i \in [\ell]$.
        Each $w \in T_i \setminus \{u,v\}$ is a common positive neighbor of $u$ and $v$ unless
        $uw$ or $vw$ is a disagreement. By assumption that $\opt(G) < c  \log^\gamma n$,
        at most $2 \cdot \opt(G) < 2c  \log^\gamma n$ such exceptions can exist.
        Hence:
        \begin{align*}
            |N^{+}_G(u) \cap N^{+}_G(v)|
            &\geq |T_i| - 2 - 2 \opt(G)\\
            &\geq \theta_{\mathrm{size}} - 2 - 2c\log^\gamma n\\
            &\geq \frac{97c\log^\gamma n}{(1-q)^2}. \tag{A1}\label{eq:A1}
        \end{align*}

          For each $z \in N^{+}_G(u) \cap N^{+}_G(v)$, define $W_z = X_{vz} \cdot X_{zu}$. Recall that $X_{vz}$ is an indicator variable for the event that $vz$ survives the deletion process. Then, $\ex[W_z] = (1-q)^2$, as edge deletions are independent.

        The $W_z$'s are mutually independent as they concern disjoint edge pairs.
        Let $W = \sum_{z} W_z = |N^{+}_{G'}(u) \cap N^{+}_{G'}(v))|$.
        By \eqref{eq:A1},
        \[
            \mu_W := E[W] = (1-q)^2\cdot |N_G^{+}(u)\cap N_G^{+}(v)| \geq 97c\cdot \log^\gamma n.
        \]
        Applying the multiplicative Chernoff bound with
        $\delta=\nicefrac12$ yields
        \begin{align*}
            \Pr[S_{u,v}] &\le \Pr\!\Bigl[W\le\nicefrac12\cdot \mu_W \Bigr]
             \le \exp\Bigl(-\nicefrac{\mu_W}{8}\Bigr)\\
             &\le \exp\Bigl(-\nicefrac{97c}{8}\cdot \log^{\gamma}n\Bigr)\\
             &\le n^{-97/8} \le n^{-12},
        \end{align*}
        where the penultimate step uses $\gamma\ge 1$ and $c \ge 1$, so $c \cdot \log^{\gamma}n\ge\log n$ and
        $\exp(-\nicefrac{97c}{8}\cdot \log^{\gamma}n)\le\exp(-\nicefrac{97}{8}\cdot \log n)=n^{-97/8}$.

        \paragraph{Proof of Step 2.} Fix a vertex $v$ and some $T_i$ with $v \not\in T_i$. Observe that $|N^-_G(v) \cap T_i| \ge \nicefrac{99c}{1-q} \cdot  \log^\gamma n$, since otherwise the completeness of $G$ would imply $|N^+_G(v) \cap T_i| > c \log^{\gamma} n$, which is a contradiction to $|N^+_G(v) \cap T_i|  \le \opt(G) < c \log^{\gamma} n$.

        For each $u \in N^-_G(v) \cap T_i$, define $W_u =  X_{uv}$. Then, $\ex[W_u] = 1-q$. Let $W = \sum_{u \in N^-_G(v) \cap T_i} W_u$. Then,
        $$
            \mu_W := \ex[W] = (1-q) \cdot |N^-_G(v) \cap T_i| \ge 99 \cdot c \log^\gamma n.
        $$
        The $W_u$'s are mutually independent as they concern disjoint edges.
        Applying the multiplicative Chernoff bound with
        $\delta=\tfrac12$ yields
        \begin{align*}
            \Pr\Bigl[W\le\nicefrac12 \cdot \mu_W \Bigr]
            &\le \exp\Bigl(-\nicefrac{1}{8}\cdot \mu_W\Bigr)\\
            &\le \exp\Bigl(-\nicefrac{99c}{8}\cdot \log^{\gamma}n\Bigr)\\
            &\le n^{-\nicefrac{99}{8}} \le n^{-12},
        \end{align*}
        where the penultimate step uses $\gamma\ge 1$ and $c \ge 1$, so $c \log^{\gamma}n\ge\log n$ and
        $\exp(-\nicefrac{
        99c}{8}\cdot \log^{\gamma}n)\le\exp(-\nicefrac{99}{8}\cdot \log n)=n^{-\nicefrac{99}{8}}$.

        Let $Y^i_v$ denote the event that $|N^-_{G'}(v) \cap T_i| < 48 c\cdot  \log^\gamma n$. Clearly, $\Pr[Y^i_v] \le \Pr\!\Bigl[\,W\le\nicefrac12 \cdot \mu_W\,\Bigr] \le n^{-12}$. Applying a union bound yields
        $$
        \Pr[Y_v] \le \sum_{i \in [\ell]} \Pr[Y^i_v] \le n \cdot n^{-12} = n^{-11}.
        $$
   \end{proof}
\fi

The main implication of~\Cref{lem:l1:preserve:large:clusters} is that all large clusters in an optimal solution for $G$ are also part of an optimal solution for $G'$ with high probability.

\begin{corollary}
    \label{cor:preserve:large:clusters}
    For $G' \sim \mathcal{S}_q(G)$, with probability at least $1-n^{-3}$, every optimal clustering for $G'$ contains $T_1,\ldots, T_\ell$ of~\Cref{lem:l1:preserve:large:clusters}.
\end{corollary}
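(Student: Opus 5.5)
We condition on the event of \Cref{lem:l1:preserve:large:clusters}, which holds with probability at least $1-n^{-3}$, and show deterministically that every optimal clustering $\mathcal{C}'$ for $G'$ contains each $T_i$ as a cluster. The argument is an exchange argument with two parts: (a) no $T_i$ is split across several clusters of $\mathcal{C}'$, and (b) the cluster of $\mathcal{C}'$ containing $T_i$ contains no vertex outside $T_i$. In both parts we assume the contrary and construct a strictly cheaper clustering, contradicting optimality. The key quantitative fact is $\opt(G') \le \opt(G) < c\log^\gamma n$: $G'$ is a subgraph of $G$, so the optimal clustering of $G$ costs at most $\opt(G)$ on $G'$. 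Every threshold $48c\log^\gamma n$ in the lemma therefore dominates $\opt(G')$.

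\textbf{Part (a).} Suppose $T_i$ meets at least two clusters of $\mathcal{C}'$, and take $u,v\in T_i$ in different clusters. By property~1, $u$ and $v$ share at least $48c\log^\gamma n$ common positive neighbours $w$ in $G'$. For each such $w$, at least one of the positive edges $uw$, $vw$ is cut, because $w$ cannot lie in the cluster of $u$ and in the cluster of $v$ simultaneously. These edges are distinct for distinct $w$, so $\text{cost}_{G'}(\mathcal{C}')\ge 48c\log^\gamma n>\opt(G')$, a contradiction. Hence $T_i$ lies inside a single cluster $C$ of $\mathcal{C}'$.

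\textbf{Part (b).} Suppose $C\supsetneq T_i$ and pick $v\in C\setminus T_i$. By property~2, $v$ has at least $48c\log^\gamma n$ negative $G'$-edges into $T_i\subseteq C$. All of them are disagreements, so again $\text{cost}_{G'}(\mathcal{C}')>\opt(G')$. Hence $C=T_i$.

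In both parts the contradiction comes from the absolute cost bound, so no delicate exchange accounting is needed. The only step requiring care is the inequality $\opt(G')\le\opt(G)$, together with the observation that property~1 applies to every pair $u\ne v$ in $T_i$. If $|T_i|=1$, part (a) is vacuous; this case cannot occur anyway, since $|T_i|\ge\theta_{\mathrm{size}}$. The probability bound is inherited directly from \Cref{lem:l1:preserve:large:clusters}, because the conclusion holds for all optimal clusterings simultaneously on that single event.
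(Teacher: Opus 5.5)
Your proof is correct and follows the paper's route: condition on the two properties of \Cref{lem:l1:preserve:large:clusters}, use property~1 to rule out splitting any $T_i$, and use property~2 to rule out a vertex outside $T_i$ sharing its cluster. In each case you get a cost of at least $48c\log^\gamma n > \OPT(G) \ge \OPT(G')$. The one inaccuracy is your opening description of an exchange argument that builds a strictly cheaper clustering. As you note yourself later, the contradiction comes directly from this absolute cost bound, exactly as in the paper.
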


\begin{proof}
    With probability at least $1-n^{-3}$ the properties of~\Cref{lem:l1:preserve:large:clusters} hold for $G'$. Let $\mathcal{C}^*$ denote an optimal clustering for $G'$ conditioned on the properties of~\Cref{lem:l1:preserve:large:clusters} holding.

    Fix some $T_i$. We can observe that all pairs $u,v \in T_i$ have to be in the same cluster of $\mathcal{C}^*$, as if $u$ and $v$ were separated, then each element of $N^+_{G'}(u) \cap N^{+}_{G'}(v)$ incurs a cost. Using the first property of~\Cref{lem:l1:preserve:large:clusters}, this gives $\text{cost}_{G'}(\mathcal{C}^*) \ge 48 c \cdot \log^\gamma n > \opt(G) \ge \opt(G')$; a contradiction.

   Hence, there is some $C \in \mathcal{C}^*$ with $T_i \subseteq C$. It remains to show $C \subseteq T_i$. Assume there is a $v \in C \setminus T_i$. By the second property of~\Cref{lem:l1:preserve:large:clusters}, the vertex $v$ incurs a cost of at least $48 c \log^\gamma n > \opt(G) \ge \opt(G')$; a contradiction.
\end{proof}

Step 1 and Step 2 of~\Cref{alg:low_cost} are designed to recover the large clusters $T_1,\ldots, T_\ell$ provided that the properties of~\Cref{lem:l1:preserve:large:clusters} hold.

\begin{restatable}{observation}{ObsLargeClusters}
    \label{obs:find:large:clusters}
    If the two properties of~\Cref{lem:l1:preserve:large:clusters} hold, then $C_{\text{p}}$ of~\Cref{alg:low_cost} contains exactly the clusters $T_1,\ldots, T_\ell$.
\end{restatable}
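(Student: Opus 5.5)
The plan is to compare the similarity graph $H$ with an optimal clustering $\mathcal{C}_G$ of $G$ whose large clusters are $T_1,\ldots,T_\ell$. I would show three things: the components of $H$ refine $\mathcal{C}_G$; each $T_i$ is connected in $H$; and every component inside a small cluster falls below the threshold $\theta_{\mathrm{size}}$. Together these say that the components of size at least $\theta_{\mathrm{size}}$ are exactly $T_1,\ldots,T_\ell$. Only the first property of \Cref{lem:l1:preserve:large:clusters} is needed; the rest is a deterministic consequence of $\opt(G) < c\log^\gamma n$.

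\textbf{Step 1: every edge of $H$ lies inside a cluster of $\mathcal{C}_G$.} Let $u,v$ lie in different clusters $[u]\neq[v]$ of $\mathcal{C}_G$, and take any $w \in N^+_{G'}(u)\cap N^+_{G'}(v) \subseteq N^+_G(u)\cap N^+_G(v)$.
\begin{itemize}
\item If $w \notin [u]$, then $uw$ is a positive edge of $G$ cut by $\mathcal{C}_G$.
\item If $w \in [u]$, then $w \notin [v]$, so $vw$ is a positive edge of $G$ cut by $\mathcal{C}_G$.
\end{itemize}
So each common neighbour $w$ is charged to a disagreement of $\mathcal{C}_G$ in $G$ incident to $u$ or to $v$. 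Distinct $w$ give distinct edges $uw$ or $vw$, so at most $\opt(G)$ common neighbours are charged on each side. This gives
\[
|N^+_{G'}(u)\cap N^+_{G'}(v)| \le 2\opt(G) < 2c\log^\gamma n < 48c\log^\gamma n .
\]
Hence $\{u,v\}\notin H$. It follows that every connected component $S_j$ of $H$ is contained in a single cluster of $\mathcal{C}_G$.

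\textbf{Step 2: each $T_i$ is a component of $H$.} By the first property of \Cref{lem:l1:preserve:large:clusters}, every pair $u\neq v$ in $T_i$ has at least $48c\log^\gamma n$ common positive neighbours in $G'$. So $T_i$ is a clique in $H$ and hence lies inside one component. By Step 1 that component lies inside $T_i$, so it equals $T_i$. Its size is at least $\theta_{\mathrm{size}}$, so $T_i$ is placed into $C_{\text{p}}$.

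\textbf{Step 3: no other component reaches the threshold.} Any component not equal to some $T_i$ lies, by Step 1, inside a cluster of $\mathcal{C}_G$ that is not among $T_1,\ldots,T_\ell$. Such a cluster has size below $\theta_{\mathrm{size}}$, so the component does too, and it goes to $R$. Therefore $C_{\text{p}}$ consists exactly of $T_1,\ldots,T_\ell$.

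The step I expect to be the main obstacle is the interaction with the trivial-cluster preprocessing. Removing trivial clusters deletes whole vertex sets, and this can only shrink common neighbourhoods. Step 1 is therefore unaffected, and Step 3 is unaffected because components only get smaller. Step 2 needs care: the common neighbours guaranteed by \Cref{lem:l1:preserve:large:clusters} must not have been removed. Any $w \in N^+_{G'}(u)$ with $u \in T_i$ is joined to $u$ by a positive edge of $G'$. If $w$ lay in a trivial cluster, no positive edge of $G'$ could leave that cluster, so the cluster would also contain $u$ and, by the same positive-edge argument, a large part of $T_i$. I would argue that in this case $T_i$ itself is part of a trivial cluster. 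Then it is output unchanged, either as a removed trivial cluster or by identifying it with the corresponding component, which is the reading of the observation I would adopt. In the remaining case none of these common neighbours is removed, and the clique argument of Step 2 goes through as stated.
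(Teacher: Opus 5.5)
Your proof is correct and follows the paper's argument. Property 1 makes each $T_i$ a clique in $H$, an $H$-edge between two different optimal clusters of $G$ would force too many disagreements, and so every component of size at least $\theta_{\mathrm{size}}$ lies in a single large optimal cluster and equals some $T_i$; your explicit charging of each common neighbour $w$ to the cut edge $uw$ or $vw$ just makes precise the paper's one-line claim that such an edge would give $\mathrm{OPT}(G) > c\log^\gamma n$.

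Your paragraph on the trivial-cluster preprocessing goes beyond the paper, which simply assumes there are no trivial clusters, and it is easier than you expect. No positive edge of $G'$ leaves a trivial cluster, so removal leaves the common neighbourhoods of surviving vertices unchanged. If some vertex of $T_i$ is removed, property 1 pulls all of $T_i$ into that trivial cluster, and property 2 (so not only property 1, as you claimed) rules out further vertices, making the removed cluster exactly $T_i$.
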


\ifarxiv
\begin{proof}
    Fix some $T_i$.
    By the first property of~\Cref{lem:l1:preserve:large:clusters}, we have $T_i \subseteq H'$ where $H'$ is some connected component of the similarity graph of~\Cref{alg:low_cost}. If $H'\setminus T_i \not= \emptyset$, then there are vertices $v \in H'\setminus T_i$ and $u \in T_i$ with $$|N_{G}^+(v) \cap N_{G}^+(u)| \ge |N_{G'}^+(v) \cap N_{G'}^+(u)| \ge 48 c \cdot \log^\gamma n.$$
    However, since $T_i$ is an optimal cluster for $G$ and $v \not\in T_i$, this implies $\opt(G) > c \log^\gamma n$, which is a contradiction. Conversely, let $S$ be a connected component of $H$ with
$|S| \ge \theta_{\mathrm{size}}$. Every edge $uv$ of $H$ satisfies
$|N^{+}_{G}(u) \cap N^{+}_{G}(v)| \ge |N^{+}_{G'}(u) \cap N^{+}_{G'}(v)|
\ge 48c \cdot \log^{\gamma} n$, so by the argument above $u$ and $v$ lie in the same cluster of the optimal
solution for $G$. Since any two vertices of $S$ are joined by a path in $H$,
induction along the path shows that all vertices of $S$ lie in a common
cluster $C$ of that solution. Then $|C| \ge |S| \ge \theta_{\mathrm{size}}$,
so $C = T_i$ for some $i$, and $T_i$ is itself a
component of $H$. As $S \subseteq T_i$ and components are disjoint, $S = T_i$.
\end{proof}
\fi

\Cref{obs:find:large:clusters} states that Step 1 and 2 recover the large clusters in the optimal solution of $G$. After these two steps, the algorithm clusters the remaining vertices using the algorithm of~\citet{demaine2006correlation}. One aspect of our proof is to observe that the number of such vertices is small.

\begin{restatable}{observation}{ObsFewRemain}
    \label{obs:few:remaining:vertices}
    If the two properties of~\Cref{lem:l1:preserve:large:clusters} hold, then $|R| \le  O\left (\nicefrac{c^2}{(1-q)^2} \right ) \cdot  \log^{2\gamma} n$ with $R$ defined as in~\Cref{alg:low_cost}.
\end{restatable}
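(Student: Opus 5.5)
The plan is to bound $|R|$ by covering $R$ with a few small clusters of a fixed optimal clustering $\mathcal{C}^G$ of the complete graph $G$. Throughout we assume that both properties of~\Cref{lem:l1:preserve:large:clusters} hold.

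\textbf{Step 1: what is left in $R$.} By~\Cref{obs:find:large:clusters}, the set $C_{\text{p}}$ is exactly $\{T_1,\ldots,T_\ell\}$, the clusters of $\mathcal{C}^G$ of size at least $\theta_{\mathrm{size}}$. Hence every vertex of $R$ satisfies two conditions: it was not removed in the preprocessing step, and it lies in a cluster $C\in\mathcal{C}^G$ with $|C|<\theta_{\mathrm{size}}$. If some $T_i$ was already removed as trivial, this changes nothing, since its vertices are not in $R$ either.

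\textbf{Step 2: disagreement-free small clusters are removed.} Let $C\in\mathcal{C}^G$ be a cluster that incurs no disagreement in $G$.
\begin{itemize}
\item Since $G'$ is a subgraph of $G$, the cluster $C$ incurs no disagreement in $G'$ either.
\item Consequently, every connected component $K$ of $G'[E'^+]$ that meets $C$ is contained in $C$, because no positive edge of $G'$ leaves $C$.
\item Such a $K$ contains no negative edge of $G'$, because $C$ contains none.
\item So every such $K$ is a trivial cluster in the sense of~\Cref{lem:trivial:cluster}, and is removed by the preprocessing step.
\end{itemize}
It follows that all vertices of $C$ are removed. Therefore every vertex of $R$ lies in a small cluster of $\mathcal{C}^G$ that has at least one incident disagreement edge in $G$.

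\textbf{Step 3: counting.} Each disagreement edge of $\mathcal{C}^G$ in $G$ touches at most two clusters. Hence the number of clusters of $\mathcal{C}^G$ with at least one disagreement is at most $2\,\opt(G)<2c\log^\gamma n$. Each such small cluster has fewer than $\theta_{\mathrm{size}}=\frac{100c\log^\gamma n}{(1-q)^2}$ vertices. Multiplying the two bounds gives
\[
|R|< 2c\log^\gamma n\cdot \frac{100c\log^\gamma n}{(1-q)^2}=O\!\left(\nicefrac{c^2}{(1-q)^2}\right)\cdot\log^{2\gamma} n .
\]

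\textbf{Expected obstacle.} The only delicate point is Step 2. The preprocessing removes connected components of the positive graph of $G'$, not clusters of $\mathcal{C}^G$ directly, and these components can be finer than $C$. The argument that each such component is itself trivial is what handles this. It needs only that $G'$ is a subgraph of $G$, and not the probabilistic properties.
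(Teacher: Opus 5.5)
Your proof is correct and follows the paper's argument. Vertices of $R$ lie in small clusters of the optimal clustering of $G$ that carry at least one disagreement; there are fewer than $2c\log^\gamma n$ such clusters, and each has fewer than $\theta_{\mathrm{size}}$ vertices. Your Step 2 makes explicit a point the paper only asserts by calling these clusters ``non-trivial'': triviality is defined in $G'$ while the count uses disagreements in $G$, and your argument shows that disagreement-free clusters of $G$ are entirely removed by the component-based preprocessing.
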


\ifarxiv
\begin{proof}
    In the optimal solution for $G$, each $v \in R$ is part of a non-trivial cluster with size $\le \theta_{\text{size}}$. By our assumption that $\opt(G) < c \log^{\gamma} n$, there can be at most $2 \cdot c \log^{\gamma} n$ such clusters. Hence, $|R| \le  \theta_{\text{size}} \cdot 2 \cdot c \log^{\gamma} n = O\left (\nicefrac{c^2}{(1-q)^2}\right )  \cdot \log^{2\gamma} n$.
\end{proof}
\fi

We can now combine the previous lemmas to prove the approximation factor claimed in~\Cref{low-cost-direct}.

\begin{proof}[Proof of \Cref{low-cost-direct}]
    By \Cref{lem:trivial:cluster}, we may assume for the rest of the proof that there are no trivial clusters.

    Let $T_1,\ldots, T_\ell$ denote the large clusters with $|T_i| \ge \theta_{\mathrm{size}}$ in an optimal solution for the complete graph $G$. With probability at least $1-n^{-3}$, these clusters are part of every optimal solution for $G'$ by~\Cref{cor:preserve:large:clusters}. By~\Cref{obs:find:large:clusters}, $T_i \in C_{\text{p}}$ for each $i \in [\ell]$. The cost incurred by these large clusters, i.e., the number of disagreements that contain at least one vertex in $\bigcup_i T_i$, is the same in the output of \Cref{alg:low_cost} and in the optimal solution.

    It remains to consider the cost incurred by the clusters in $C_R$. Let $c_R^*$ and $c_R$ denote the number of disagreements between vertices in $R$ in the optimal solution for $G'$ and in the algorithm's solution, respectively.
    As argued above, the ratio between $c_R$ and $c_R^*$ upper bounds the approximation ratio of our algorithm. 
    Since we are using the algorithm of~\citet{demaine2006correlation}, we get
    $
        c_R \le O(\log |R|) \cdot c_{R}^*.
    $
    By~\Cref{obs:few:remaining:vertices}, 
    $$
    c_R \le O (\log |R|) \cdot c_{R}^* \le O \left(\log (\nicefrac{c^2}{(1-q)^2} \cdot  \log^{2\gamma} n) \right) \cdot c_{R}^*.
    $$
\end{proof}

\ifarxiv
\begin{proof}[Proof of running time]
The algorithm first computes \( |N_{G'}^{+}(u)\cap N_{G'}^{+}(v)| \) for every pair of vertices \(u,v\in V\). These values can be computed in \(O(n^{\omega})\) time, where \(\omega<2.373\) is the exponent of matrix multiplication. Using these values, the auxiliary graph \(H\) is constructed by considering every pair of vertices, which requires \(O(n^2)\) time. The connected components of \(H\) are then computed in linear time \(O(|V(H)|+|E(H)|)\). Finally, we use the algorithm of~\citet{demaine2006correlation} on the residual graph. Since this algorithm runs in polynomial time, every step of Algorithm~\ref{alg:low_cost} runs in polynomial time. Therefore, the overall running time is polynomial.
\end{proof}
\fi

Using~\Cref{low-cost-direct} and Lemma \ref{lem:opt:high:prob}, we can prove~\Cref{thm:l1:both:graphs} by picking the appropriate values for $\gamma$ and $c$.

\ifarxiv
\thmBothGraphs*

\begin{proof}
    Let $\beta =\frac{1}{2}$, $c = 54/[\beta^2(1-q)^3]$, and $\gamma = 1$. Consider the following algorithm:
    \begin{enumerate}
        \item Compute a clustering $\mathcal{C}_1$ for $G$ by running some constant $\alpha$-approximation for the complete graph setting on $G$.
        \item Compute a clustering $\mathcal{C}_2$ for $G'$ by running~\Cref{alg:low_cost} with parameters $\gamma$ and $c$ on $G'$
        \item Let $\mathcal{C}$ denote whichever clustering, $\mathcal{C}_1$ or $\mathcal{C}_2$, has smaller cost on $G'$; return $\mathcal{C}$.
    \end{enumerate}

    We distinguish the two cases (1) $\opt(G) \ge c \log n$ and (2) $\opt(G) < c \log n$.

    \paragraph{Case (1):} Assume $\opt(G) \ge c \log n$. By~\Cref{lem:opt:high:prob}, we have
    \begin{align*}
    \text{cost}_{G'}(\mathcal{C}) \leq \frac{18\alpha}{(1-\beta)(1-q)^2} \cdot \OPT(G')
    \leq \frac{36\alpha}{(1-q)^2} \cdot \OPT(G')
    \leq O(\nicefrac{1}{(1-q)^2})  \cdot \OPT(G')
    \end{align*}
    with probability at least $1-n^{-2.5}$.

   \paragraph{Case (2):} Assume $\opt(G) < c \log n$. By~\Cref{low-cost-direct}, we have
   \begin{align*}
       \text{cost}_{G'}(\mathcal{C}) &\le O \left (\log \left (\nicefrac{c^2 }{(1-q)^2} \cdot \log^{2\gamma} n\right )\right) \cdot \opt(G')\\
       &\le
O(\log\left(\nicefrac{1}{(1-q)^8}\cdot  (\log n)^2\right) \cdot \opt(G')\\
        &= O(\log\log n + \log(\nicefrac{1}{1-q}))
   \end{align*}
    with probability at least $1-n^{-3}$.

    In both cases, with probability at least $1-n^{-2.5}$ we have that
    \begin{align*}
         &\text{cost}_{G'}(\mathcal{C})
         \le O(\max\{\nicefrac{1}{(1-q)^2}, \log\log n + \log(\nicefrac{1}{1-q})\}) \cdot \opt(G').
    \end{align*}
\end{proof}
\fi

\section{Min-max Objective}

In this section, we will show that an algorithm given only $G'$ achieves a cost of $O(1/(1-q)) \cdot \max\{c\log n, \mathrm{OPT}_{\infty}(G')\}\footnote{If $\OPT_\infty(G')=0$, the connected components of $G'[E'^+]$ are a
zero-cost clustering. Hence we can assume $\OPT_\infty(G') \ge 1$ }$ with high probability, in particular:

\begin{restatable}{theorem}{thmMinMaxCorrelation}
\label{lemma:min_max_correlation}
    Given $G' \sim \mathcal{S}_{q}(G)$, there exists an algorithm that, with high probability, returns a clustering for $G'$ with cost at most $ O(1/(1-q)) \cdot \max\{\log n, \mathrm{OPT}_{\infty}(G')\}$. 
\end{restatable}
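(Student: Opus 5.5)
The plan is to adapt the neighborhood-overlap algorithm of \citet{heidrich2024} to $G'$, using $G$ only in the analysis. Let $O := \OPT_\infty(G')$ and let $\mathcal{C}^*$ be an optimal min-max clustering for $G'$.

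\textbf{Transfer of optimality to $G$.} By a Chernoff bound and a union bound over the $n$ vertices, with high probability every vertex $v$ satisfies
\[
\text{cost}_{G}(\mathcal{C}^*,v) \le O\!\left(\tfrac{1}{1-q}\right)\cdot \max\{\log n,\ \text{cost}_{G'}(\mathcal{C}^*,v)\}.
\]
This is not immediate, because $\mathcal{C}^*$ depends on $G'$. To make it rigorous, I would use the contrapositive. Fix a vertex $v$ and a candidate set $C\ni v$ whose $G$-disagreements at $v$ number $D\ge K\log n/(1-q)$. The probability that fewer than $(1-q)D/2$ of these disagreements survive is at most $e^{-\Omega((1-q)D)}$. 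The difficulty is that the number of candidate sets $C$ is exponential in $n$. To avoid enumerating all sets, I would instead prove the statement for the clusters the algorithm produces and for a polynomial family of structured sets. Concretely, I would show that for all pairs $u,v$, the quantities $|N^+_{G'}(u)\triangle N^+_{G'}(v)|$ and $|N^+_{G'}(u)\cap N^+_{G'}(v)|$ concentrate around $(1-q)$ times their $G$-counterparts (up to the $(1-q)^2$ factor for intersections) within an additive $O(\sqrt{\cdot\,\log n}+\log n)$. There are only $O(n^2)$ such events, so a union bound over them is cheap.

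\textbf{Algorithm.} Working directly on $G'$, I would build a graph $H$ that joins $u$ and $v$ when their $G'$ positive neighborhoods overlap substantially. The threshold is set relative to degree: roughly, $|N^+_{G'}(u)\triangle N^+_{G'}(v)|\le \lambda\,\max\{\log n, O\}$, where $O$ is guessed by trying all $n$ values and keeping the best output. I would then output the connected components of $H$, or the clusters formed by a pivot-like procedure on $H$.

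\textbf{Analysis.} The analysis follows \citet{heidrich2024}. Two vertices in the same cluster of $\mathcal{C}^*$ have neighborhood symmetric difference at most $2O$ in $G'$, so they are joined in $H$. Vertices that are far apart in $H$ must be in different clusters. Applied to each output cluster, this bounds a vertex's disagreements by a constant times $O$ plus the $\log n$ slack in $G'$. The $1/(1-q)$ factor enters when a degree that is small in $G'$ must be lifted to $G$, where the combinatorial triangle-inequality arguments on complete graphs apply, and then pushed back down.

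\textbf{Main obstacle.} The hardest step is the lack of completeness in $G'$: an absent edge is neither positive nor negative. The counting arguments of \citet{heidrich2024} rely on the identity $N^+\cup N^-=V$. I would run those arguments in $G$ using the concentrated pairwise statistics. Specifically, I would show that each output cluster $C$ satisfies $\text{cost}_G(C,v)\le O(\max\{\log n,O\}/(1-q))$, and then apply Chernoff to these polynomially many fixed output clusters to return to $G'$. The bound in this last step loses only a constant factor.
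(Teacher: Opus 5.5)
\textbf{There is a genuine gap.} It sits at the obstacle you name but never resolve. Your similarity test compares $|N^+_{G'}(u)\triangle N^+_{G'}(v)|$ against an absolute threshold. You say the threshold is ``relative to degree'', but the rule you write is absolute. On $G'$ this statistic is not controlled by $\OPT_\infty(G')$.

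Your claim that two vertices of the same cluster $C$ of $\mathcal{C}^*$ have $G'$-symmetric difference at most $2\,\OPT_\infty(G')$ is false. A vertex $w\in C$ with $uw\in E'^+$ and $vw\notin E'$ lies in the symmetric difference, yet it is a disagreement for neither $u$ nor $v$. Your concentration claim fails for the same reason. Each $w\in N^+_G(u)\cap N^+_G(v)$ enters the $G'$-symmetric difference with probability $2q(1-q)$. So the expectation is about $(1-q)|N^+_G(u)\triangle N^+_G(v)|+2q(1-q)|N^+_G(u)\cap N^+_G(v)|$, not $(1-q)$ times the $G$-value.

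This breaks the algorithm itself, not just the analysis. Take $q=1/2$ and let $G$ consist of positive cliques of sizes $n/2,n/8,n/8,n/8,n/8$, with all edges between cliques negative, so $\OPT_\infty(G')=0$.
\begin{itemize}
\item Pairs inside the big clique have $G'$-symmetric difference about $n/4$.
\item Pairs from two different small cliques have about $n/8$, since their $G'$-neighbourhoods are disjoint.
\end{itemize}
Hence every threshold either isolates all vertices of the big clique (cost about $n/4$ each) or merges the small cliques (cost about $3n/16$). So for every guess the output has min-max cost $\Theta(n)$ rather than $O(\log n)$, and a pivot on the same $H$ fails identically.

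The detour through $G$ does not rescue this. The step back to $G'$ is trivial: $G'\subseteq G$ gives $\text{cost}_{G'}(C,v)\le\text{cost}_G(C,v)$ for every $C$, with no Chernoff needed. The hard direction is bounding the $G$-costs of the output clusters by $\max\{\log n,\OPT_\infty(G')\}/(1-q)$. Those clusters depend on $G'$ and are not ``fixed'', so this is exactly the union-bound problem you flagged and left open.

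\textbf{The missing idea} is to use only tests that are sound in an arbitrary signed graph, and to invoke completeness of $G$ in a single probabilistic step. Suppose $\text{cost}_{G'}(\mathcal{C})\le d$.
\begin{itemize}
\item $|N^+_{G'}(u)\cap N^+_{G'}(v)|>2d$ forces $[u]_{\mathcal{C}}=[v]_{\mathcal{C}}$, since if they are split every common positive neighbour is a disagreement at $u$ or at $v$.
\item $|N^+_{G'}(v)\cap N^-_{G'}(u)|>2d$ forces $[u]_{\mathcal{C}}\neq[v]_{\mathcal{C}}$.
\end{itemize}
Neither uses $N^+\cup N^-=V$.

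Completeness is used only to show the following, for $\lambda>8/(1-q)$ and $d\ge c\log n$. With high probability, every $v$ with $|N^+_{G'}(v)|>\lambda d$ and every $u\neq v$ satisfy $|N^+_{G'}(v)\cap N^+_{G'}(u)|+|N^+_{G'}(v)\cap N^-_{G'}(u)|>4d$. The reason is that each pair $uw$ with $w\in N^+_{G'}(v)\setminus\{u\}$ is an edge of $G$ that survives independently of the edges at $v$, so Chernoff applies.

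Thus one of the two tests always fires. Join $v$ to $u$ whenever $|N^+_{G'}(v)|>\lambda d$ and $|N^+_{G'}(u)\cap N^+_{G'}(v)|>2d$. This recovers exactly every cluster of $\mathcal{C}^*$ that contains such a high-degree vertex. All remaining vertices become singletons of cost $|N^+_{G'}(v)|\le\lambda d$, which is where the $1/(1-q)$ factor comes from. The value $d$ is found by doubling from $c\log n$, and no transfer of costs between $G$ and $G'$ is ever needed.
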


Throughout this section, we let $\text{cost}_{G'}(\mathcal{C}) = \max_{v \in V} \text{cost}_{G'}(\mathcal{C},v),$ so the optimal cost is denoted by
$
    \text{OPT}_{\infty}(G') = \min_{\mathcal{C}} \text{cost}_{G'}(\mathcal{C}).$

We begin by providing a key lemma from the $4$-approximation algorithm for the min-max objective on complete graphs given by \citet{heidrich2024}.
\begin{lemma} \cite{heidrich2024}
\label{lem:og-4apx}
 For $G$ complete, fix $t \ge \text{OPT}(G)$.  If $|N^+(u) \cap N^+(v)| > 2t$, then $u$ and $v$ must belong to the same cluster in any optimal clustering.
If $|N^+(u) \Delta N^+(v)| > 2t$, then $u$ and $v$ must belong to different clusters in any optimal clustering.
\end{lemma}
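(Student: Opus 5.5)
The lemma is quoted from \citet{heidrich2024}, but its proof is short and we plan to reprove both parts directly. Fix an optimal min-max clustering $\mathcal{C}^*$ of the complete graph $G$. Then every vertex $w$ satisfies $\text{cost}_G(\mathcal{C}^*,w) \le \OPT_\infty(G) \le t$. In both parts we assume the conclusion fails and find a single vertex whose disagreements exceed $t$.

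\textbf{First part.} Suppose $|N^+(u)\cap N^+(v)|>2t$ but $u$ and $v$ lie in different clusters $[u]\neq[v]$ of $\mathcal{C}^*$. Each $w\in N^+(u)\cap N^+(v)$ misses at least one of the two clusters, since $[u]$ and $[v]$ are disjoint. So either $uw$ or $vw$ is a positive edge cut by $\mathcal{C}^*$, and hence a disagreement charged to $u$ or to $v$. Summing over all such $w$ gives
\[
\text{cost}_G(\mathcal{C}^*,u)+\text{cost}_G(\mathcal{C}^*,v) \ge |N^+(u)\cap N^+(v)| > 2t .
\]
Therefore one of $u,v$ has cost greater than $t$, contradicting $\OPT_\infty(G)\le t$.

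\textbf{Second part.} Suppose $|N^+(u)\,\Delta\, N^+(v)|>2t$ but $[u]=[v]=:C$. Take any $w\in N^+(u)\setminus N^+(v)$ with $w\ne v$. Completeness of $G$ gives $vw\in E^-$ (here completeness is used). If $w\in C$, then $vw$ is a negative edge inside $C$, a disagreement for $v$. If $w\notin C$, then $uw$ is a positive edge leaving $C$, a disagreement for $u$. The vertices of $N^+(v)\setminus N^+(u)$ are handled symmetrically. Each $w$ in the symmetric difference is thus charged to at least one of $u,v$, so again the two costs sum to more than $2t$ and one of them exceeds $t$.

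\textbf{Main obstacle.} The only delicate step is the boundary bookkeeping: the vertices $u,v$ themselves can belong to the symmetric difference, e.g.\ $v\in N^+(u)$ but $u\notin N^+(v)$ is impossible since edges are undirected, yet $v\in N^+(u)\setminus N^+(v)$ can hold. For such $w\in\{u,v\}$ the completeness argument above does not apply. This shifts the count by at most $2$, which we handle by a strict-inequality slack or by excluding $u,v$ from the neighborhoods, as \citet{heidrich2024} do. Everything else is the direct charging above, and the proof uses no probabilistic ingredient.
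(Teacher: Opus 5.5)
Your first part is correct, and your charging argument for the second part works for every $w\notin\{u,v\}$. The gap is exactly the boundary step you flagged, and neither remedy you propose closes it. A strict inequality cannot absorb a loss of $2$: from $|N^+(u)\Delta N^+(v)|>2t$ you only get $\mathrm{cost}_G(\mathcal{C}^*,u)+\mathrm{cost}_G(\mathcal{C}^*,v)>2t-2$, which does not push either cost above $t$. In fact, with the open neighborhoods $N^+(x)=\{w : \{w,x\}\in E^+\}$ from the paper's preliminaries, the second part is false. Let $G$ be a disjoint union of positive cliques (all other pairs negative) and let $u,v$ lie in the same clique. Then $\OPT_\infty(G)=0$, so $t=0$ is admissible, and $N^+(u)\Delta N^+(v)=\{u,v\}$ has size $2>2t$. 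Yet the unique optimal clustering keeps $u$ and $v$ together. Excluding $u,v$ from the symmetric difference does give a true statement, but only under the stronger hypothesis $|(N^+(u)\Delta N^+(v))\setminus\{u,v\}|>2t$, so it is not the lemma.

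The missing idea is to read the lemma with closed neighborhoods $N^+[x]:=N^+(x)\cup\{x\}$. That is, you include each vertex in its own positive neighborhood rather than exclude it, and then the boundary costs nothing. Indeed, $u,v\in N^+[u]\Delta N^+[v]$ holds exactly when $uv\in E^-$. If also $[u]=[v]$, this negative edge lies inside a cluster and is a disagreement for both endpoints. So you charge $w=v$ to $u$ and $w=u$ to $v$ through that edge, and handle every other $w$ by your completeness argument. The charged edges are distinct within each of the two costs, so $\mathrm{cost}_G(\mathcal{C}^*,u)+\mathrm{cost}_G(\mathcal{C}^*,v)\ge|N^+[u]\Delta N^+[v]|>2t$ with no slack needed. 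Your first part survives this change of convention: $u$ or $v$ lying in $N^+[u]\cap N^+[v]$ forces $uv\in E^+$, and that edge is cut when $[u]\neq[v]$. The paper itself only cites this lemma. Its own analogue for $G'$, \Cref{lem:crossneg}, avoids the issue by using $N^+(v)\cap N^-(u)$, which never contains $u$ or $v$.
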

Although Lemma \ref{lem:og-4apx} holds for complete graphs, it inspires Algorithm \ref{alg:min-max} and informs the algorithm's analysis.

\paragraph{Algorithm.} We note that Algorithm \ref{alg:min-max} is only given the incomplete graph $G'$.
We choose $c$ as a large constant. 
Since $\opt_{\infty}(G')$ is not known in advance, we run
\Cref{alg:min-max} for every guess
$d \in \{c\log n,\, 2c\log n,\, 4c\log n,\, \ldots,\, n\}$ \raj{Should we be more explicit about this being binary search}\jens{It's doubling and not binary search, right? Up to you, add the keyword if you like, but I think it is clear enough} which returns clustering $\mathcal{C}_d$, then we compute
$\text{cost}_{G'}(\mathcal{C}_d)$ for each $\mathcal{C}_d$, and output the partition of minimum cost.
    

\begin{algorithm}[h]
\caption{\small Robust Min-Max Correlation Clustering }
\label{alg:min-max}
\KwIn{ $G'=(V,E'^+ \cup E'^-)$, deletion probability $q\in[0,1)$, guess $d$, parameter $\lambda >0$}
\KwOut{Partition $\mathcal{C}$}

$S \gets \{v\in V : |N^{+}_{G'}(v)|>\lambda d\}$ 

Initialize auxiliary graph $H=(V,\emptyset)$\;

\ForEach{$v\in S$}{
    \ForEach{$u\in V,\;u\neq v$}{
        \If{$|N_{G'}^{+}(u)\cap N_{G'}^{+}(v)|>2d$}{
            add edge $(u,v)$ to $H$\;
        }
    }
}

$\mathcal{C} \gets$ connected components of $H$\;


\Return{$\mathcal{C}$}

\end{algorithm}





By slightly adapting the proof of Lemma  \ref{lem:og-4apx}, the following two lemmas combined give the analogous result for non-complete graphs.
\begin{restatable}{lemma}{lemForced}
\label{lem:forced}
Let $G'$ be a signed (not necessarily complete) graph, and let $\mathcal{C}$ be a partition of $G'$
such that $\text{cost}_{G'}(\mathcal{C})\le d$. If $u,v \in V$ have
$|N^{+}_{G'}(u)\cap N^{+}_{G'}(v)|>2d$, then
$[u]_{\mathcal{C}}=[v]_{\mathcal{C}}.$
\end{restatable}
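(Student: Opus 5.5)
The plan is a proof by contradiction via a simple counting (pigeonhole) argument over the common positive neighbors of $u$ and $v$. Suppose $[u]_{\mathcal{C}} \neq [v]_{\mathcal{C}}$, and let $W := N^{+}_{G'}(u)\cap N^{+}_{G'}(v)$, so $|W| > 2d$. Since $u \notin N^+_{G'}(u)$ and $v \notin N^+_{G'}(v)$ (there are no self-loops), neither $u$ nor $v$ lies in $W$.

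The key observation is that each $w \in W$ has positive edges in $G'$ to both $u$ and $v$, and these two vertices lie in different clusters. Hence $w$ can be in the cluster of at most one of them. If $w \notin [u]_{\mathcal{C}}$, the positive edge $uw$ is a disagreement counted in $\text{cost}_{G'}(\mathcal{C},u)$. Otherwise $w \in [u]_{\mathcal{C}}$, so $w \notin [v]_{\mathcal{C}}$, and the positive edge $vw$ is a disagreement counted in $\text{cost}_{G'}(\mathcal{C},v)$.

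Summing over all $w \in W$ gives
\[
\text{cost}_{G'}(\mathcal{C},u) + \text{cost}_{G'}(\mathcal{C},v) \;\ge\; |W| \;>\; 2d .
\]
Therefore $\max\{\text{cost}_{G'}(\mathcal{C},u),\text{cost}_{G'}(\mathcal{C},v)\} > d$, which contradicts $\text{cost}_{G'}(\mathcal{C}) \le d$ under the min-max objective.

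Completeness of the graph is never used, because the argument only involves edges that are known to exist in $G'$, namely the positive edges from $u$ and $v$ to $W$. I do not expect any step to be a genuine obstacle. The only point needing care is that the charged disagreements for $u$ and for $v$ must be distinct edges. This holds automatically: each $w$ contributes one edge, either $uw$ or $vw$, and the edges contributed by different $w$ are distinct.
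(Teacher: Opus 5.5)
Your proof is correct and is essentially the paper's argument: the paper obtains this lemma by adapting the complete-graph lemma of \citet{heidrich2024}, whose proof is this same counting argument. If $u$ and $v$ are separated, each common positive neighbor forces a disagreement at $u$ or at $v$, so $\text{cost}_{G'}(\mathcal{C},u)+\text{cost}_{G'}(\mathcal{C},v) > 2d$. As you observe, completeness is never used.
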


\begin{restatable}{lemma}{lemCrossneg}
\label{lem:crossneg}
Let $G'$ be a signed (not necessarily complete) graph and let $\mathcal{C}$ be a partition of $G'$
such that $\text{cost}_{G'}(\mathcal{C})\le d$. If $u,v \in V$ have
$|N^{+}_{G'}(v) \cap N^{-}_{G'}(u)|>2d$, then $
[u]_{\mathcal{C}}\neq [v]_{\mathcal{C}} .$
\end{restatable}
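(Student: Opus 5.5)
The plan is a proof by contradiction that uses the disagreement budget of a single vertex. Suppose $[u]_{\mathcal{C}} = [v]_{\mathcal{C}}$ and call this common cluster $C$. Consider the set $W := N^{+}_{G'}(v) \cap N^{-}_{G'}(u)$, which by assumption satisfies $|W| > 2d$. We split $W$ according to whether each vertex lies inside or outside $C$, namely $W_{\mathrm{in}} := W \cap C$ and $W_{\mathrm{out}} := W \setminus C$. Then $|W_{\mathrm{in}}| + |W_{\mathrm{out}}| = |W| > 2d$.

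Next we charge each part to one of the two vertices.

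\begin{itemize}
    \item Every $w \in W_{\mathrm{out}}$ is a positive neighbor of $v$ in $G'$ that lies outside $[v]_{\mathcal{C}} = C$. So the edge $vw$ is a disagreement incident to $v$, which gives $|W_{\mathrm{out}}| \le \text{cost}_{G'}(\mathcal{C}, v) \le d$.
    \item Every $w \in W_{\mathrm{in}}$ is a negative neighbor of $u$ in $G'$ that lies inside $[u]_{\mathcal{C}} = C$. So the edge $uw$ is a disagreement incident to $u$, which gives $|W_{\mathrm{in}}| \le \text{cost}_{G'}(\mathcal{C}, u) \le d$.
\end{itemize}

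Summing the two bounds gives $|W| \le 2d$, contradicting $|W| > 2d$. Hence $[u]_{\mathcal{C}} \neq [v]_{\mathcal{C}}$.

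Completeness of the graph is never used: both neighborhoods are taken in $G'$, and the costs count exactly the edges present in $G'$. Two edge cases need a remark. If $u$ or $v$ lies in $W$, the charging still works, since $u \notin N^{-}_{G'}(u)$ and $v \notin N^{+}_{G'}(v)$; so in fact $u, v \notin W$ and no adjustment is needed. If $u = v$, the set $W$ would be empty, so the hypothesis forces $u \neq v$. There is no real obstacle here; the only step needing care is checking that each part of $W$ is charged to the correct endpoint's cost.
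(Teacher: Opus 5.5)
Your proof is correct and follows the same charging argument the paper uses when it adapts the Heidrich et al.\ lemma to $G'$. If $u$ and $v$ share a cluster, each $w \in N^{+}_{G'}(v)\cap N^{-}_{G'}(u)$ is a disagreement either at $v$ (when $w$ lies outside the cluster) or at $u$ (when it lies inside), which gives more than $2d$ disagreements split between two vertices whose costs are each at most $d$.
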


Next, Lemma \ref{lem:event_low_overlap} proves that for
a high-degree vertex $v$, with high probability, every vertex $u$ falls into one of the two
cases captured by Lemmas \ref{lem:forced} and \ref{lem:crossneg}. In other words, the deletion of edges is very unlikely to destroy so many edges between $u$ and
$N^+_{G'}(v)$ that the choice on whether to cluster $u$ with $v$ is ambiguous. The proof follows from a Chernoff bound.


\begin{definition}
\label{def:low_overlap_event}
Fix $\lambda > 0$ and $d > 0$. Define
$
\mathcal{E}_{\lambda,d}$ to be the event that for all $v \in V$ with $ |N_{G'}^{+}(v)| > \lambda d$, every $ u \in V \setminus \{v\} $ has $ |N_{G'}^{+}(v) \cap N_{G'}^{+}(u)| + |N_{G'}^{+}(v) \cap N_{G'}^{-}(u)| > 4d.
$
\end{definition}

\begin{restatable}{lemma}{lemOverlap}
\label{lem:event_low_overlap}
Fix $\lambda > \frac{8}{1-q}$ and  $d \geq  c \log n$
for $c \geq \frac{24}{(1-q)\lambda}$.
Then $\Pr[\mathcal{E}_{\lambda,d}] \geq 1 - \frac{1}{n}$.
\end{restatable}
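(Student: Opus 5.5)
The plan is a Chernoff bound for each fixed pair $(v,u)$, followed by a union bound over all pairs. The key fact is that $G$ is complete. So for any $w \in N^+_{G'}(v)$ with $w \neq u$, the pair $uw$ is an edge of $G$, positive or negative. Hence $w \in N^+_{G'}(u) \cup N^-_{G'}(u)$ exactly when the edge $uw$ survives. Since $N^+_{G'}(u)$ and $N^-_{G'}(u)$ are disjoint, the quantity in \Cref{def:low_overlap_event} equals
\[
\sum_{w \in N^+_{G'}(v)\setminus\{u\}} X_{uw}.
\]

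\textbf{Fixing a pair.} Fix an ordered pair $v \neq u$. Let $F_{v,u}$ be the bad event that $|N^+_{G'}(v)| > \lambda d$ and the sum above is at most $4d$.

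The set $A := N^+_{G'}(v)\setminus\{u\}$ is determined only by the edges incident to $v$. The variables $X_{uw}$ with $w \neq v$ are edges not incident to $v$, so they are independent of those edges. I therefore condition on any realization of the edges at $v$ with $|N^+_{G'}(v)| > \lambda d$. Under this conditioning, $|A| > \lambda d - 1$, and the sum is a sum of $|A|$ independent $\mathrm{Ber}(1-q)$ variables. Its mean is
\[
\mu \ge (1-q)(\lambda d - 1).
\]
Because $\lambda > \nicefrac{8}{1-q}$ strictly and $d \ge c\log n$ is large, this gives $\mu \ge 8d$, up to absorbing the $-1$ into the slack (or into the constant $c$). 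Then $4d \le \nicefrac{\mu}{2}$.

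\textbf{Chernoff step.} The multiplicative Chernoff bound with $\delta = \nicefrac12$ gives
\[
\Pr[F_{v,u} \mid \text{edges at } v] \le \exp(-\mu/8) \le \exp\!\big(-(1-q)\lambda d/8\big).
\]
Using $d \ge c\log n$ and $c \ge \nicefrac{24}{(1-q)\lambda}$, this is at most $\exp(-3\log n) = n^{-3}$. Averaging over the conditioning, $\Pr[F_{v,u}] \le n^{-3}$.

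\textbf{Union bound.} The complement of $\mathcal{E}_{\lambda,d}$ is the union of the $F_{v,u}$ over at most $n^2$ ordered pairs. So $\Pr[\overline{\mathcal{E}_{\lambda,d}}] \le n^2 \cdot n^{-3} = \nicefrac1n$, as claimed.

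\textbf{Main obstacle.} The delicate point is the dependence between the random set $N^+_{G'}(v)$ and the edges from $u$ into it. Conditioning on the edges incident to $v$ resolves this, because the edges $uw$ with $w \neq v$ are disjoint from them. The other care point is the boundary case $u \in N^+_{G'}(v)$, which removes one element from $A$. The strict inequality on $\lambda$ covers this, possibly with a slightly larger constant.
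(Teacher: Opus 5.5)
Your proposal is correct and follows the route the paper indicates ("the proof follows from a Chernoff bound"), with the lemma's constants matching exactly: by completeness of $G$, the quantity in $\mathcal{E}_{\lambda,d}$ counts the surviving edges from $u$ into $N^+_{G'}(v)$; Chernoff with $\delta=\nicefrac12$ gives $\exp(-(1-q)\lambda d/8)\le n^{-3}$ per pair; and a union bound over at most $n^2$ pairs gives $1-\nicefrac1n$ (conditioning on the edges incident to $v$ is a clean way to justify the independence). The only loose end is the $-1$ from $u\in N^+_{G'}(v)$: since $c\ge 24/((1-q)\lambda)$ leaves no slack in the exponent, $\exp(-\mu/8)\le\exp(-(1-q)\lambda d/8)$ does not follow from $\mu>(1-q)(\lambda d-1)$ (it loses a factor $e^{(1-q)/8}$), and $\mu\ge 8d$ likewise needs $d$ large relative to $(1-q)/((1-q)\lambda-8)$, so you should state explicitly that you take $c$ somewhat larger, which is harmless because the theorem treats $c$ as a large constant.
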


The rest of the vertices are made singletons, and we show their cost is at most $\lambda d$. 

\begin{restatable}{lemma}{lemSingleton}
\label{lem:singleton}
Fix $\lambda>1$ and $d>0$ with $\OPT_\infty(G') \leq d$.
Condition on $\mathcal{E}_{\lambda,d}$, and let $\mathcal{C}$ be the
clustering returned by \Cref{alg:min-max} on $G'$. Then every vertex $v$ that is
a singleton in $\mathcal{C}$ satisfies
$\mathrm{cost}_{G'}(\mathcal{C},v) \leq \lambda d$.
\end{restatable}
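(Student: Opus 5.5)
The plan is to split into two cases according to whether the singleton $v$ lies in the set $S$ of \Cref{alg:min-max}. The starting observation is that for a singleton $v$ we have $[v]_{\mathcal{C}} = \{v\}$. Hence $N^-_{G'}(v) \cap [v]_{\mathcal{C}} = \emptyset$ and every positive edge at $v$ is a disagreement, so
\[
\mathrm{cost}_{G'}(\mathcal{C},v) = |N^+_{G'}(v)|.
\]

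\textbf{Case $v \notin S$.} By the definition of $S$ we have $|N^+_{G'}(v)| \le \lambda d$, and the bound follows immediately.

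\textbf{Case $v \in S$.} Here I will show that $v$ cannot be a singleton, so this case does not occur. Fix an optimal min-max clustering $\mathcal{C}^*$ for $G'$; by assumption $\mathrm{cost}_{G'}(\mathcal{C}^*) = \OPT_\infty(G') \le d$. Since $v \in S$, we have $|N^+_{G'}(v)| > \lambda d$, so the event $\mathcal{E}_{\lambda,d}$ applies to $v$. For every $u \neq v$ it gives
\[
|N^+_{G'}(v)\cap N^+_{G'}(u)| + |N^+_{G'}(v)\cap N^-_{G'}(u)| > 4d,
\]
so at least one of the two terms exceeds $2d$.

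If some $u \neq v$ has $|N^+_{G'}(u)\cap N^+_{G'}(v)| > 2d$, then the algorithm adds the edge $(u,v)$ to $H$, because $v \in S$. Then $v$ lies in a connected component of $H$ of size at least two, contradicting that $v$ is a singleton.

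Otherwise every $u \neq v$ satisfies $|N^+_{G'}(v)\cap N^-_{G'}(u)| > 2d$. By \Cref{lem:crossneg} applied to $\mathcal{C}^*$, every such $u$ lies in a different cluster of $\mathcal{C}^*$ than $v$. So $v$ is a singleton in $\mathcal{C}^*$, and therefore
\[
\mathrm{cost}_{G'}(\mathcal{C}^*,v) = |N^+_{G'}(v)| > \lambda d > d,
\]
using $\lambda > 1$. This contradicts $\mathrm{cost}_{G'}(\mathcal{C}^*) \le d$.

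The only delicate step is the second subcase, where the cross-negative condition must hold for \emph{all} $u$ at once; this is exactly what lets us conclude that $v$ is isolated in $\mathcal{C}^*$. The event $\mathcal{E}_{\lambda,d}$ supplies this uniformly over $u$, since it quantifies over every $u \in V\setminus\{v\}$. Everything else is a direct application of the definitions, \Cref{lem:crossneg}, and the threshold $\lambda d$ used to define $S$.
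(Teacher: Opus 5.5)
Your proof is correct and is essentially the intended argument. You split on membership in $S$, use the trivial bound $|N^+_{G'}(v)|\le\lambda d$ for $v\notin S$, and for $v\in S$ combine $\mathcal{E}_{\lambda,d}$ with \Cref{lem:crossneg} to show that a vertex of $S$ isolated in $H$ is isolated in an optimal clustering, which is the same reasoning as Step~1 of the proof of \Cref{lemma:min_max_correlation}. The only cosmetic difference is that you turn this into a contradiction, showing that singletons never lie in $S$; it would equally suffice to note that such a $v$ has $\mathrm{cost}_{G'}(\mathcal{C},v)=\mathrm{cost}_{G'}(\mathcal{C}^*,v)\le d\le\lambda d$.
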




\begin{proof}[Proof of \Cref{lemma:min_max_correlation}]
Let $d = \mathrm{OPT}_{\infty}(G')$, $\lambda = \frac{8}{1-q} +1$, and $\hat{d} = \max\{c\log n, d\}$. 
Condition on $G'$ satisfying event $\mathcal{E}_{\lambda,\hat{d}}$ (which by Lemma \ref{lem:event_low_overlap} occurs with probability $1-n^{-1}$).
Run
Algorithm \ref{alg:min-max} with parameters $\hat{d}$ and $\lambda$, and let its output clustering be $\mathcal{C}$.
Note that this is not the output clustering of the full algorithm within the guessing framework, which will lose a factor of $2$ compared to $\mathcal{C}$ due to the doubling strategy.
Let $S = \{v \in V : |N_{G'}^{+}(v)| > \lambda\hat{d}\}$. 
Fix an optimal clustering $\mathcal{C}^*$ for $G'$.

\medskip
\noindent\textbf{Step 1: $H$ recovers $\mathcal{C}^*$.}
We conditioned on $\mathcal{E}_{\lambda,d}$, and so
for all $v \in S$ and $u \in V$, exactly one of the following holds:
\begin{itemize}
\item $|N_{G'}^{+}(v) \cap N_{G'}^{+}(u)| > 2\hat{d}$:
      by  ~\Cref{lem:forced} applied to $\mathcal{C}^*$ with $\text{cost}_{G'}(\mathcal{C}^*) = d \leq \hat{d}$,
      we have $[v]_{\mathcal{C}^*} = [u]_{\mathcal{C}^*}$, and the algorithm adds edge $(u,v)$ to $H$.
\item $|N_{G'}^{+}(v) \cap N_{G'}^{-}(u)| > 2\hat{d}$:
      by  ~\Cref{lem:crossneg} applied to $\mathcal{C}^*$,
      we have $[v]_{\mathcal{C}^*} \neq [u]_{\mathcal{C}^*}$, and no edge is added.
\end{itemize}
Hence the algorithm adds $(u,v)$ to $H$ if and only if $[v]_{\mathcal{C}^*} = [u]_{\mathcal{C}^*}$.
Since the inner loop ranges over all $u \in V$, every optimal cluster-mate of $v$
— including those in $V \setminus S$ — is connected to $v$ in $H$.
Therefore each connected component of $H$ containing at least one vertex $ v \in S$ coincides with exactly one cluster of $\mathcal{C}^*$,
and $[v]_{\mathcal{C}} = [v]_{\mathcal{C}^*}$.
Since all non-singleton components of $H$ contain at least one vertex in $S$ by construction, every non-singleton component in $H$ coincides with exactly one cluster of $\mathcal{C}^*$.

\medskip
\noindent\textbf{Step 2: Bounding the disagreement.}
 For every non-singleton vertex $v$, $\text{cost}_{G'}(\mathcal{C},v) = \text{cost}_{G'}(\mathcal{C}^*,v) \leq d \leq \lambda\hat{d}$.
For every singleton vertex $v$,  ~\Cref{lem:singleton} gives $\text{cost}_{G'}(\mathcal{C},v) \leq \lambda\hat{d}$.
Therefore \[ \text{cost}_{G'}(\mathcal{C}) \leq  O(1/(1-q)) \cdot \max\{c\log n, \mathrm{OPT}_{\infty}(G')\}.\]
\end{proof}
\paragraph{Improved time.} In the supplementary materials, we prove that if we allow extra randomization, the running time of Algorithm \ref{alg:min-max} can be made $\widetilde{O}(n^2)$, by using the approach of \citet{cao2025min} that is based on \cite{johnson1984extensions}.



\section{Experiments } \label{sec:experiments}



Our code is written in Python $3.11.8$ \ifarxiv\footnote{Code available at \url{https://github.com/aou219/MinMax_Correlation_Clustering_Experiments}.}\fi{}. We ran our experiments on Apple M2, with 8 cores and 8 GB of memory.  
For our experiments, we used the Stanford Large Network Dataset Collection~\cite{leskovec2012learning}. Specifically, we used the \emph{ego-Facebook} dataset containing $10$ graphs that are subgraphs of a social network from Facebook (throughout this section, we call these ego-graphs). Each subgraph represents a specific user's friend list and the connection within. The graph is converted into a complete signed graph by representing friends as a positive edge and non-friends as a negative edge. We then subsample this complete graph with different sampling probabilities $q$ to generate instances $G' \sim \mathcal{S}_q(G)$. 
Graph statistics for the ego-graphs are in Table \ref{tab:stats} in the supplementary materials. We overview our main findings: 
\begin{itemize}
    \item \emph{(Min-max)}: On instances drawn from smaller Facebook ego-graphs, \Cref{alg:min-max} obtains an approximation that is at least as good as our guarantee from Theorem \ref{lemma:min_max_correlation}. Moreover, the approximation factor gets worse as $q$ increases, as expected. See the right side of Figure \ref{fig:minmax:results}.
    \item \emph{(Min-max)}: On instances drawn from the larger ego-graphs, it is computationally infeasible to obtain a lower bound on the true optimal cost. Instead we compare the output of \Cref{alg:min-max} against a surrogate for the optimal min-max cost, and still find guarantees against this proxy consistent with our theoretical results. See the left side of Figure \ref{fig:minmax:results}.  
    \item \emph{(Min-disagreement)}: In Lemma \ref{lem:sparsification_G}, we showed that any constant-factor approximate clustering for the min-disagreement objective for the complete graph $G$ is also a constant-factor approximation on $G' \sim \mathcal{S}_q(G)$ in expectation.
    While the theoretical result requires the clustering to be computed on $G$, our  experiments indicate that similar guarantees might hold when it is computed on $G'$: we run the \textsc{Pivot}~\citep{ACN-pivot} algorithm on planted-clique instances and on instances drawn from smaller Facebook ego-graphs and observe that the approximation remains constant.

     \item \emph{(Running time)}: One advantage of our Min-max algorithm is that it combinatorial, meaning that unlike the state-of-the-art it does not solve an LP. Thus it can be run on large graphs faster; see Figure \ref{fig:min_max_runtime} in the supplementary materials for running times.
\end{itemize}
We detail these highlighted results, 
and note that an extensive description of the experimental setup and more empirical results can be found in the supplementary materials.
\begin{figure*}[t]
    \centering
    \begin{minipage}{0.49\textwidth}
        \includegraphics[scale=0.5]{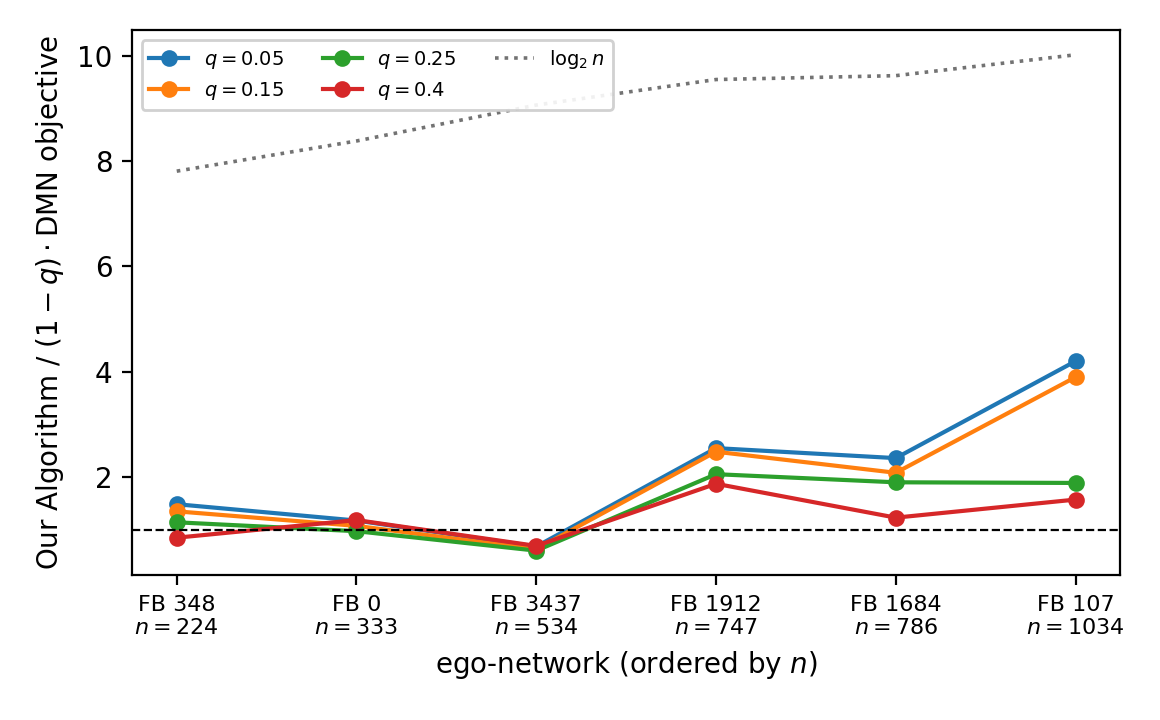 }
    \end{minipage}
    \begin{minipage}{0.49\textwidth}
        \includegraphics[scale=0.5]{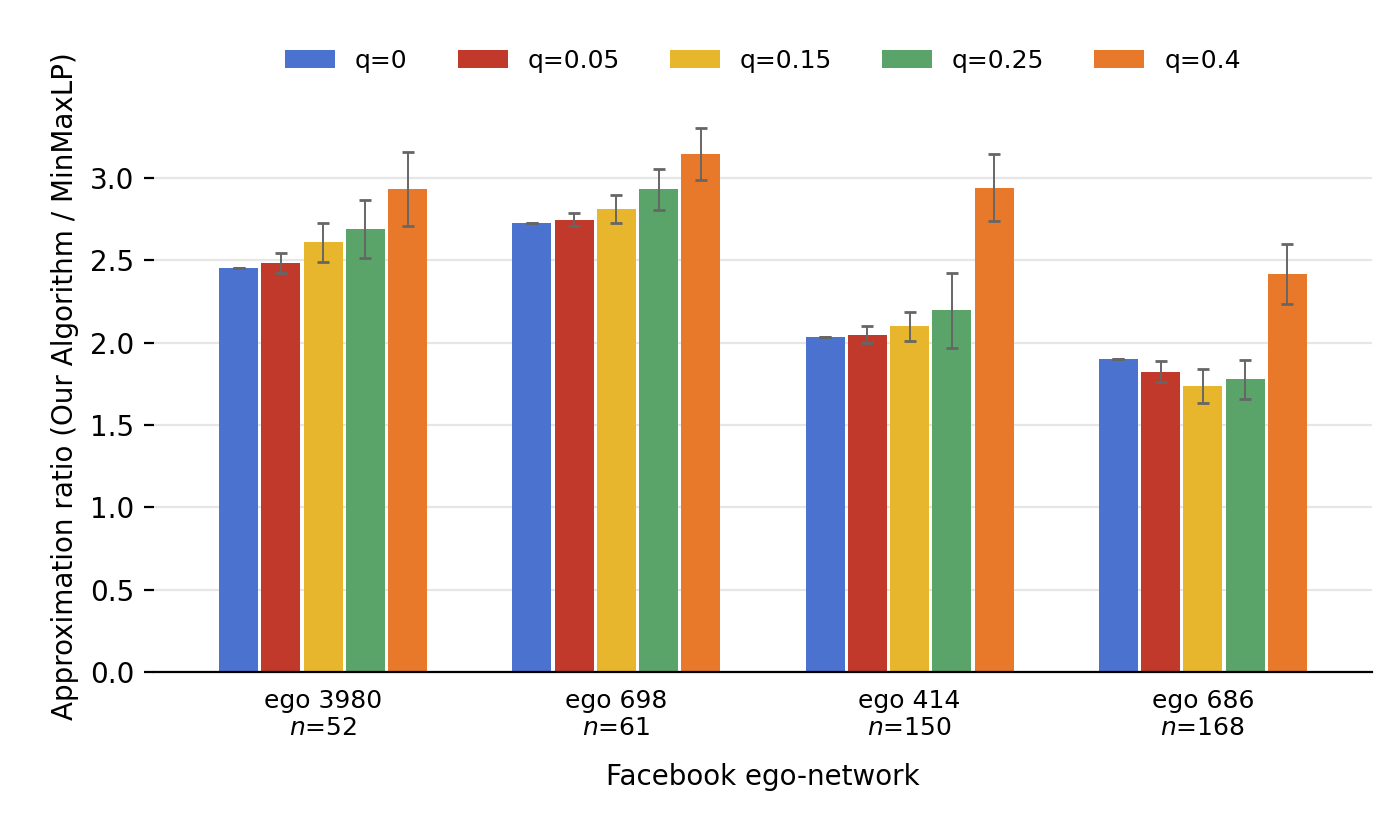}
    \end{minipage}
    \caption {Performance of \Cref{alg:min-max} on the Facebook ego-graphs under random
edge deletion, as means over $30$ seeds. \textbf{Left:} the six largest
ego-graphs, showing the cost on $G'$ relative to $(1-q)$ times the cost
reported by~\citet{DMN23} on the undeleted graph; one line per deletion
probability $q$, with $\log_2 n$ dotted for reference. \textbf{Right:} the
four smallest ego-graphs, where the min-max LP is solved on $G'$, so the ratio is an
upper bound on the true approximation ratio.}
    \label{fig:minmax:results}
\end{figure*}

\paragraph{Min-max.} For each of the $10$ Facebook ego-graphs and each deletion probability
$q \in \{0.05, 0.15, 0.25, 0.4\}$, we generated $30$ realizations $G'$ by
deleting each edge independently with probability $q$, and ran
\Cref{alg:min-max} on each. We compare the resulting cost against two
different reference values, depending on the size of the graph. On the four
smallest ego-graphs we solve the min-max LP relaxation  with
Gurobi~13.0.2, obtaining a lower bound on the cost of an optimal solution, so the reported
quantity is a upper bound on the true approximation ratio. On the six larger ego-graphs, solving the
LP is computationally not feasible, and
instead we compare against the objective achieved by the combinatorial
algorithm of~\citet{DMN23}\footnote{To the best of our knowledge, this is the lowest maximum
disagreement reported for these instances.} on the original complete graph (hereafter called the \emph{DMN objective}), rescaled by
$(1-q)$ to account for the deleted edges. The latter is a comparison
against another algorithm's
cost, not a lower bound on the optimum.

\Cref{fig:minmax:results} shows results of our experiments. On the left side, we see the average approximation ratio of~\Cref{alg:min-max} for the six
larger ego-graphs. We have one plot for each deletion probability as well as one plot for $\log(n)$ as a reference to the algorithm's theoretical guarantee of $O(1/(1-q)) \cdot \max\{c\log n, \mathrm{OPT}_{\infty}(G')\}$ (cf~\Cref{lemma:min_max_correlation}). On these graphs, if we consider the rescaled DMN objective as a proxy for the optimal, then our algorithm performs at least as well as its theoretical guarantee.

On the right side of \Cref{fig:minmax:results}, we see the average approximation ratio of \Cref{alg:min-max} on the four smallest ego-graphs. Here we solve the min-max LP
exactly, so the cost of \Cref{alg:min-max} is compared against a lower bound
on the optimum. The ratios seem to increase with $q,$ which we view as the expected behavior. Note that this is \emph{not} the case for the plot on the left side, where the ratio of the largest deletion probability is smallest. We suspect this to be an artifact of comparing against the scaled DMN objective instead of the LP. 

\paragraph{Min-disagreement.}

\begin{figure*}[t]
    \centering
    \begin{minipage}{0.49\textwidth}
        \centering
        \includegraphics[width=\linewidth]{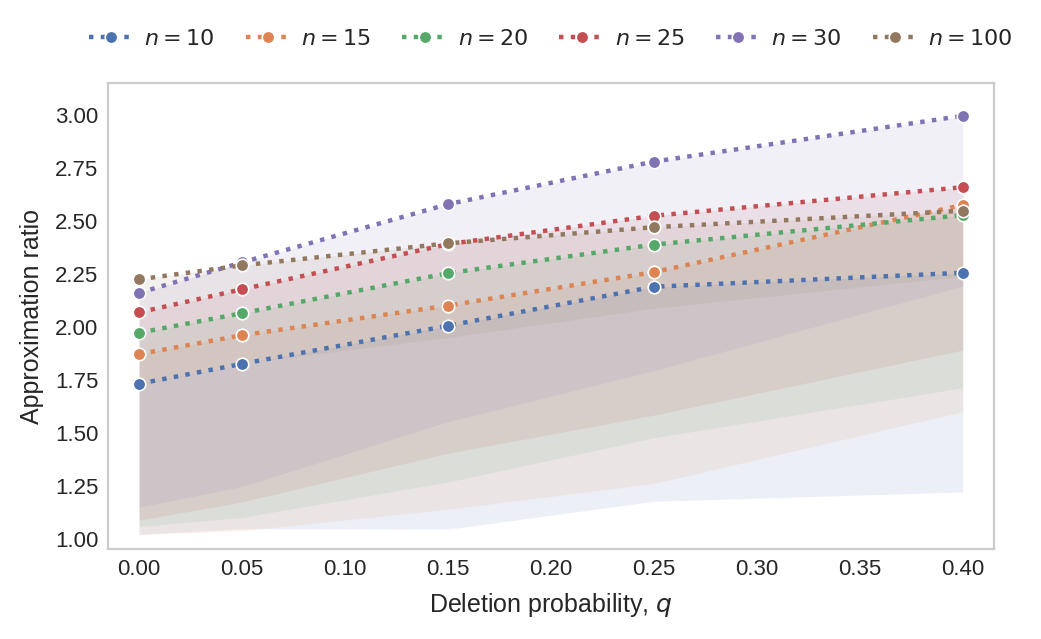}
    \end{minipage}
    \hfill
    \begin{minipage}{0.49\textwidth}
        \centering
        \includegraphics[width=\linewidth]{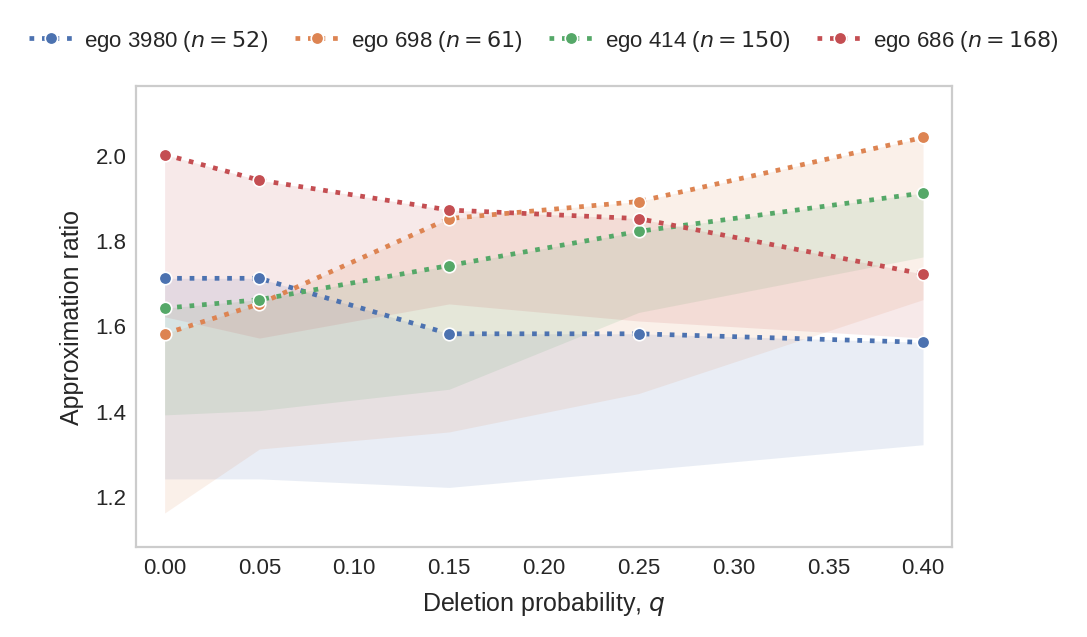}
    \end{minipage}
    \caption{Approximation ratio of \textsc{Pivot} against the LP optimum, as a function of the deletion probability $q$. Dotted lines show the mean ratio and the shaded region spans the
    best ratio observed to the mean. \textbf{Left:} planted-clique
    instances, over $50$ independent seeds per instance ($20$ seeds for
    $n=100$). \textbf{Right:} the four smallest Facebook ego-graphs, over
    $30$ seeds.}
    \label{fig:pivot:results}
\end{figure*}

For the min-disagreement objective we evaluate
\textsc{Pivot} run on non-complete graphs.
\Cref{lem:sparsification_G} bounds \textsc{Pivot} on the original complete
graph $G$, which an algorithm does not observe; we test whether the same
behavior holds on $G'$ itself. We use two families: planted-clique instances on $n \in \{10, 15, 20, 25, 30, 100\}$ vertices,
partitioned into cliques according to several decompositions per size, with
intra-clique density $p_{\mathrm{in}} = 0.9$ and inter-clique density
$p_{\mathrm{out}} = 0.1$ (the decompositions are listed in Table \ref{tab:clique_decompositions} in the supplementary
material), and the four smallest Facebook ego-graphs.
For each instance and each $q$ we generated $30$ realizations $G'$ and ran
\textsc{Pivot} on each. All of these instances are small enough to solve the
min-disagreement LP with Gurobi~13.0.2, so we compare against the LP optimum
of $G'$ throughout.
\Cref{fig:pivot:results,fig:pivot:results} shows the results. The performance of \textsc{Pivot} on all such $G'$ is good enough to indicate that it might behave similarly to the in expectation results of~\Cref{lem:sparsification_G}. 

There are some notable differences between the behavior of \textsc{Pivot} on the planted-clique instances versus ego-graphs instances. We see that for the planted-clique instances
(left side of \Cref{fig:pivot:results}) the approximation ratio increases steadily with
$q$ for all $n$, consistent with the $O(\nicefrac{1}{(1-q)^2})$ dependence of
\Cref{lem:sparsification_G}, even though \textsc{Pivot} is run on $G'$ rather
than on $G$. On the Facebook ego-graphs instances (right side of \Cref{fig:pivot:results}) no such
trend is visible. We suspect this is due to the difference in degree distribution between the
two families: the planted-clique instances are homogeneous by construction,
whereas the Facebook ego-graphs have heavy-tailed degrees, so deletion
affects their structure less uniformly.

\paragraph{Planted-clique instances.} We considered Planted-clique instances for
two reasons. The first is that solving the LP is prohibitive on the larger
social graphs, whereas these instances are small enough that the LP
optimum can be computed. The second is control: social graphs vary
in size, density and community structure at once, whereas here we vary the
number and sizes of the planted cliques.



\bibliography{cc}


\newpage
\phantom{ dirty workaround}

\appendix

\section{Supplementary Experiment Information}

This section contains additional plots for the experiments discussed in Section Experiments.

\begin{table}[!ht]
\centering
\begin{tabular}{lrrr}
\toprule
Graph & \#vertices & \#edges & max.\ pos.\ degree \\
\midrule
FB 3980 & 52 & 292 & 19 \\
FB 698 & 61 & 540 & 30 \\
FB 414 & 150 & 3{,}386 & 58 \\
FB 686 & 168 & 3{,}312 & 78 \\
\midrule
FB 348 & 224 & 6{,}384 & 100 \\
FB 0 & 333 & 5{,}038 & 78 \\
FB 3437 & 534 & 9{,}626 & 108 \\
FB 1912 & 747 & 60{,}050 & 294 \\
FB 1684 & 786 & 28{,}048 & 137 \\
FB 107 & 1{,}034 & 53{,}498 & 254 \\
\bottomrule
\end{tabular}
\caption{Graph statistics for the Facebook ego-networks. The four graphs above
the rule are those on which the LP relaxation was solved exactly; on
the six below it the LP was not solved.}
\label{tab:stats}
\end{table}

\subsection{Min-Max}

 Recall the protocol: for each ego-graph and each
deletion probability $q$ we form $G'$, run our Min-Max Algorithm (\Cref{alg:min-max}) on it, and
record the maximum disagreement. On the four smallest ego-graphs (see \Cref{tab:stats}) we solve
the min-max LP relaxation (\Cref{eq:minmax-lp}) with Gurobi $13.0.2$, so the reported quantity is a
certified approximation ratio against a lower bound on the optimum. \Cref{tab:ratio-lp} gives means, standard deviations and $95\%$ confidence
intervals\footnote{ All Confidence intervals use Student's
$t$ with $\mathrm{df}=\text{Number of independent runs}-1$.} over the $30$ seeds for the first group. \Cref{tab:dmn-ratio} reports the means, standard deviations, and $95\%$ confidence intervals over 30 seeds for the six larger ego-ids (\Cref{tab:stats}). Because the LP is infeasible in these cases, performance is instead compared against the cost reported by~\citet{DMN23} on the undeleted graph, rescaled by $(1-q)$. \Cref{fig:minmax:results-2} shows this plot. Note that the ratios of the algorithm do \emph{not} increase with the deletion probability as one would expect. We suspect that this is an artifact of comparing against the scaled value of ~\citet{DMN23} instead of on actual lower bound on the optimal solution. In~\Cref{fig:minmax:logn}, where we compare against the LP on smaller instances, the ratios scale as expected.

\Cref{alg:min-max} takes a single parameter $\lambda$. We evaluated
$\lambda \in \{5, 8, 12\}$ and
selected the value minimizing the maximum disagreement of the returned
clustering, averaged over seeds; $\lambda = 5$ was best on
most instances and is used throughout. This is below the range
\Cref{lemma:min_max_correlation} requires: the constants there are worst-case
and unoptimised, and larger values only degraded solution quality in our runs.
The experiments therefore report practical performance, not a certified
guarantee.

\begin{equation}\label{eq:minmax-lp}
\begin{alignedat}{2}
\min \quad & \max_{u \in V} \; y_u & \quad & \\[0.4em]
\text{s.t.} \quad
  & y_u = \sum_{v \in N^+_u} x_{uv} + \sum_{v \in N^-_u} (1 - x_{uv})
  && \forall u \in V \\[0.4em]
  & x_{uv} \le x_{uw} + x_{vw} && \forall u,v,w \in V \\[0.2em]
  & 0 \le x_{uv} \le 1 && \forall u,v \in V
\end{alignedat}
\end{equation}

\begin{figure}[t]
    \centering
    \includegraphics[width=0.5\columnwidth]{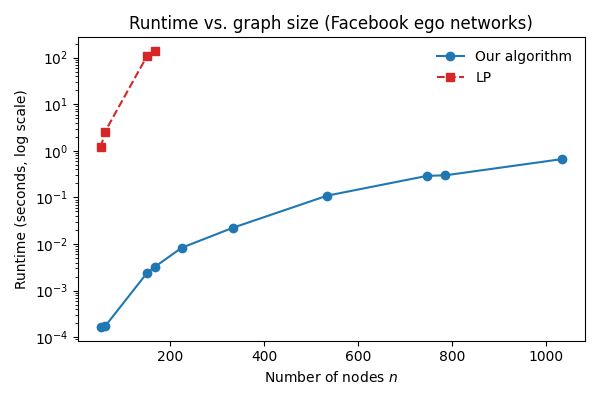}
    \caption{Average running time (log scale) of our combinatorial 
     \Cref{alg:min-max} and of LP on the Facebook ego-networks, as a function of the
    number of vertices $n$.}
    \label{fig:min_max_runtime}
\end{figure}

\begin{figure}[t]
    \centering
    \includegraphics[width=0.5\columnwidth]{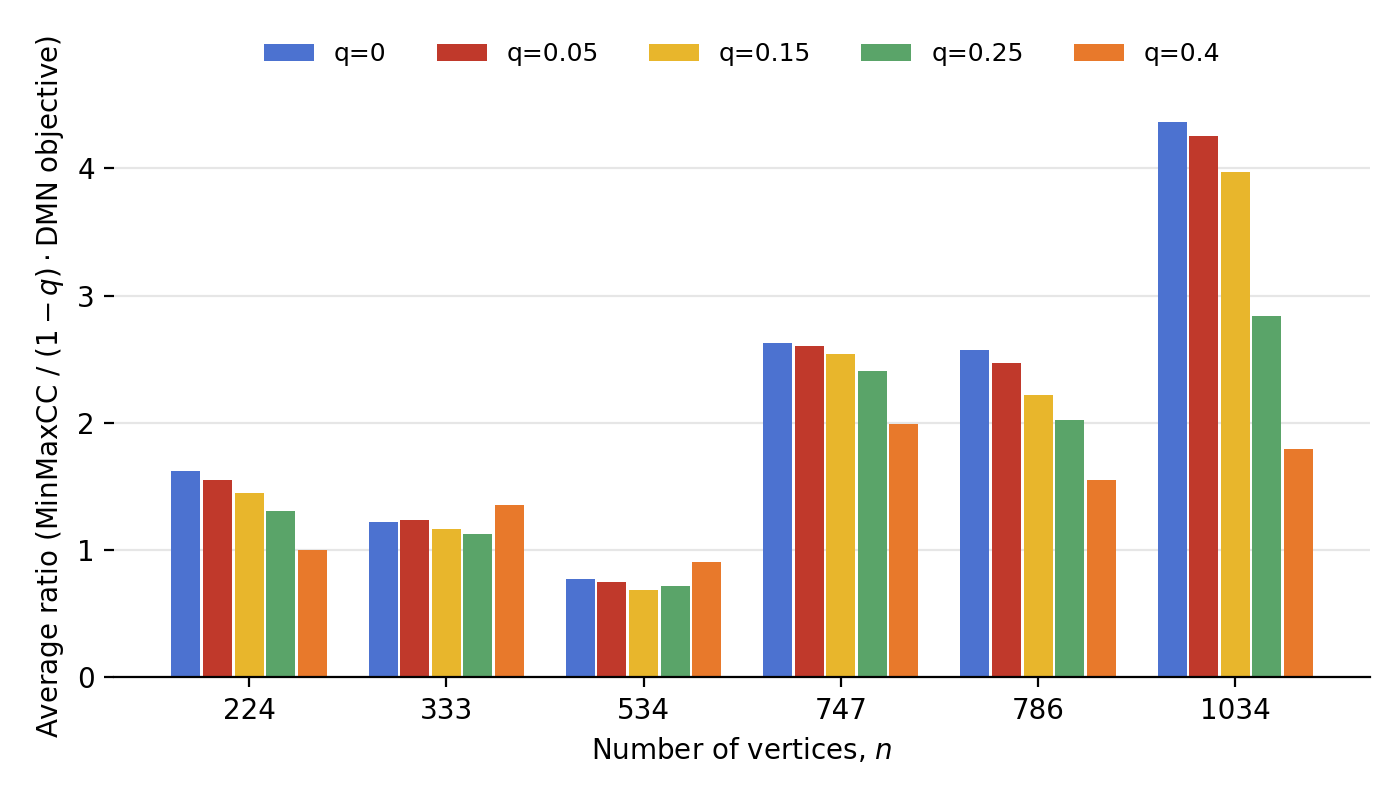}
    \caption{Cost of our Min-Max Algorithm (\Cref{alg:min-max}) on $G'$ relative to $(1-q)$ times the cost
of~\citet{DMN23} on the original graph, over the six biggest ego-graphs where the
min-max LP is infeasible. We ran it with $30$ seeds.}
    \label{fig:minmax:results-2}
\end{figure}

\begin{figure}[t]
    \centering
    \includegraphics[width=0.5\columnwidth]{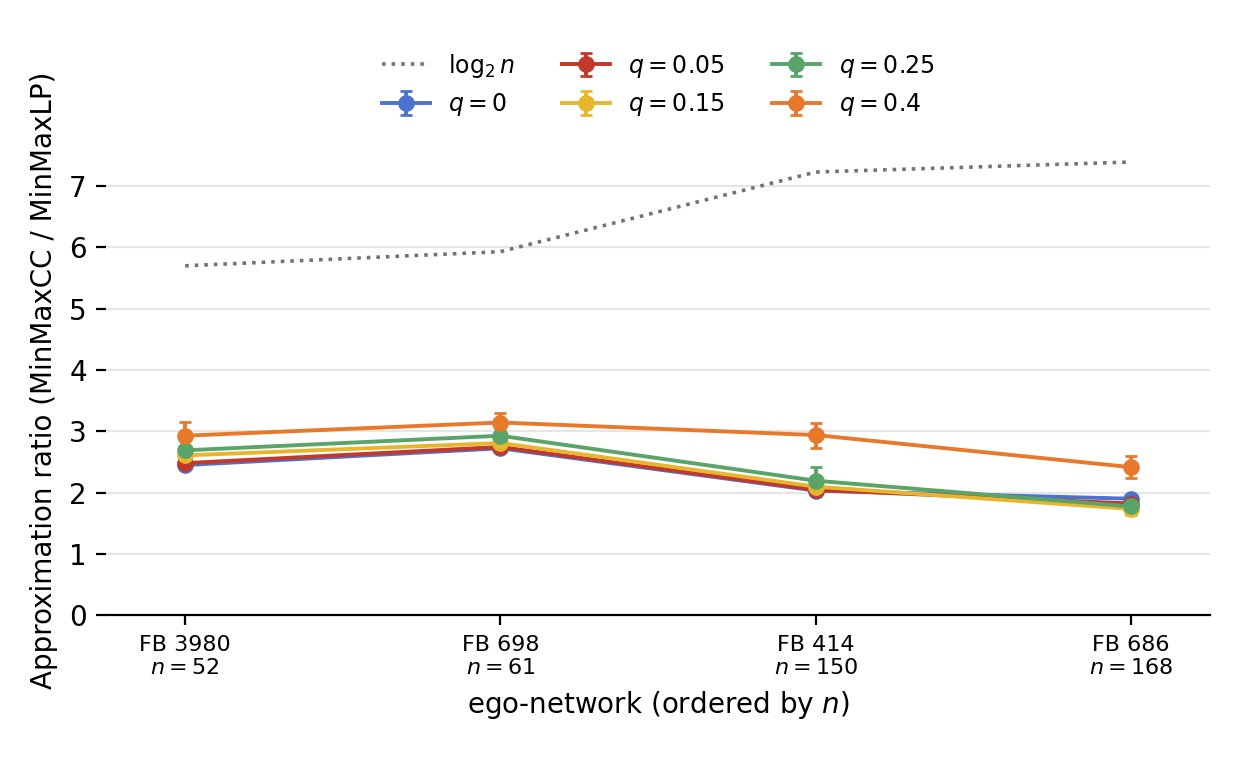}
    \caption{Approximation ratio of our Min-Max Algorithm (\Cref{alg:min-max}) against the min-max LP \Cref{eq:minmax-lp}
    optimum on the four smallest Facebook ego-graphs, plotted against graph
    size; one line per deletion probability $q$}
    \label{fig:minmax:logn}
\end{figure}

\begin{table}[t]
\centering
\small
\begin{tabular}{@{}lcccc@{}}
\toprule
 & \multicolumn{4}{c}{$q$} \\
\cmidrule(l){2-5}
Graph & 0.05 & 0.15 & 0.25 & 0.40 \\
\midrule
FB 348 & 1.550 (0.043) & 1.453 (0.056) & 1.306 (0.088) & 0.998 (0.068) \\
       & {\scriptsize $\pm.016$} & {\scriptsize $\pm.021$} & {\scriptsize $\pm.033$} & {\scriptsize $\pm.025$} \\
\addlinespace
FB 0 & 1.240 (0.038) & 1.170 (0.075) & 1.124 (0.073) & 1.358 (0.118) \\
     & {\scriptsize $\pm.014$} & {\scriptsize $\pm.028$} & {\scriptsize $\pm.027$} & {\scriptsize $\pm.044$} \\
\addlinespace
FB 3437 & 0.747 (0.029) & 0.685 (0.031) & 0.717 (0.041) & 0.904 (0.080) \\
        & {\scriptsize $\pm.011$} & {\scriptsize $\pm.012$} & {\scriptsize $\pm.015$} & {\scriptsize $\pm.030$} \\
\addlinespace
FB 1912 & 2.606 (0.028) & 2.543 (0.031) & 2.405 (0.130) & 1.991 (0.094) \\
        & {\scriptsize $\pm.010$} & {\scriptsize $\pm.012$} & {\scriptsize $\pm.049$} & {\scriptsize $\pm.035$} \\
\addlinespace
FB 1684 & 2.469 (0.043) & 2.223 (0.070) & 2.020 (0.068) & 1.550 (0.200) \\
        & {\scriptsize $\pm.016$} & {\scriptsize $\pm.026$} & {\scriptsize $\pm.025$} & {\scriptsize $\pm.075$} \\
\addlinespace
FB 107 & 4.253 (0.034) & 3.976 (0.033) & 2.844 (0.865) & 1.794 (0.101) \\
       & {\scriptsize $\pm.013$} & {\scriptsize $\pm.012$} & {\scriptsize $\pm.323$} & {\scriptsize $\pm.038$} \\
\bottomrule
\end{tabular}
\caption{Min-max cost of \Cref{alg:min-max} relative to $(1-q)$ times the
min-max cost of~\citet{DMN23} on the undeleted graph, for Facebook
ego-networks under deletion probability $q$; graphs are ordered by size. Each
cell aggregates 30 seeds and gives the mean, the standard deviation in
parentheses, and below it the
$95\%$ confidence interval as $\pm$ a margin. Values below $1$ indicate that our algorithm on $G'$ is
cheaper than the scaled baseline.}
\label{tab:dmn-ratio}
\end{table}

\begin{table}[t]
\centering
\small
\begin{tabular}{@{}lcccc@{}}
\toprule
 & \multicolumn{4}{c}{$q$} \\
\cmidrule(l){2-5}
Graph & 0.05 & 0.15 & 0.25 & 0.40 \\
\midrule
FB 3980 & 2.483 (0.061) & 2.609 (0.118) & 2.692 (0.175) & 2.931 (0.225) \\
        & {\scriptsize $\pm.023$} & {\scriptsize $\pm.044$} & {\scriptsize $\pm.066$} & {\scriptsize $\pm.084$} \\
\addlinespace
FB 698 & 2.747 (0.041) & 2.811 (0.083) & 2.930 (0.122) & 3.145 (0.158) \\
       & {\scriptsize $\pm.015$} & {\scriptsize $\pm.031$} & {\scriptsize $\pm.046$} & {\scriptsize $\pm.059$} \\
\addlinespace
FB 414 & 2.047 (0.052) & 2.099 (0.087) & 2.196 (0.228) & 2.941 (0.204) \\
       & {\scriptsize $\pm.020$} & {\scriptsize $\pm.033$} & {\scriptsize $\pm.085$} & {\scriptsize $\pm.076$} \\
\addlinespace
FB 686 & 1.824 (0.062) & 1.736 (0.101) & 1.776 (0.116) & 2.417 (0.183) \\
       & {\scriptsize $\pm.023$} & {\scriptsize $\pm.038$} & {\scriptsize $\pm.043$} & {\scriptsize $\pm.068$} \\
\bottomrule
\end{tabular}
\caption{Min-Max approximation ratio against the min-max LP (\Cref{eq:minmax-lp}) optimum 
(\Cref{alg:min-max} / MinMaxLP) for Facebook ego-networks under deletion probability
$q$. Each cell aggregates 30 seeds and gives the mean, the standard deviation
in parentheses, and below it the
$95\%$ confidence interval as $\pm$ a margin.}
\label{tab:ratio-lp}
\end{table}
\subsection{Min-disagreement }
We evaluate \textsc{Pivot}~\citep{ACN-pivot} on $G'$ across two instance
families. The first is a planted-clique family: for a given size $n$ we
partition the vertices according to each decomposition in
\Cref{tab:clique_decompositions} and draw a positive label with probability
$0.9$ within a part and $0.1$ across parts. This is a stochastic block
model with $p_{\mathrm{in}} = 0.9$ and $p_{\mathrm{out}} = 0.1$, restricted to
the block structures of \Cref{tab:clique_decompositions}. The second family is
the four smallest Facebook ego-graphs of \Cref{tab:stats}, on which the
min-disagreement LP (\Cref{eq:cc-lp}) is small enough to solve exactly, so ratios can be
measured against a true lower bound. \Cref{tab:fb-pivot-lp} and \Cref{tab:clique-pivot-lp} report the mean and
standard deviation of the observed ratio, together with $95\%$ confidence
intervals, for the Facebook ego-networks and the planted-clique instances
respectively. Figures of the results can be found in the main part of the paper.

\begin{equation}\label{eq:cc-lp}
\begin{alignedat}{2}
\min \quad & \sum_{(i,j) \in E_H^+} x_{ij}
             + \sum_{(i,j) \in E_H^-} (1 - x_{ij}) & \quad & \\[0.3em]
\text{s.t.} \quad
  & x_{ij} \le x_{ik} + x_{jk} && \forall i,j,k \in V \\[0.2em]
  & 0 \le x_{ij} \le 1 && \forall i,j \in V,\; i < j
\end{alignedat}
\end{equation}

\begin{table}[t]
\centering
{\small
\begin{tabular}{cl}
\toprule
\(n\) & \textbf{Clique decompositions} \\
\midrule
10  & \(2\times5,\ (4,3,3),\ (5,3,2)\)                 \\
15  & \(3\times5,\ (5,5,3,2),\ (8,7)\)                 \\
20  & \(2\times10,\ 4\times5,\ (7,7,6)\)               \\
25  & \((10,10,5),\ (12,7,6),\ (13,12),\ (9,8,8)\)     \\
30  & \((15,10,5),\ (20,5,5),\ 2\times15,\ 3\times10\) \\
100 & \(10\times10,\ 4\times25,\ (60,25,10,5)\)        \\
\bottomrule
\end{tabular}
}
\caption{Clique decompositions used to generate the planted-clique instances.
Here, \(k\times s\) denotes \(k\) planted cliques of size \(s\).}
\label{tab:clique_decompositions}
\end{table}

\begin{table}[t]
\centering
\small
\begin{tabular}{@{}rrccccc@{}}
\toprule
 & & \multicolumn{5}{c}{$q$} \\
\cmidrule(l){3-7}
$n$ & Runs & 0.00 & 0.05 & 0.15 & 0.25 & 0.40 \\
\midrule
10 & 150$^{\dagger}$ & 1.731 (0.65) & 1.825 (0.74) & 2.005 (0.87) & 2.189 (1.06) & 2.255 (1.03) \\
   &     & {\scriptsize $\pm.105$} & {\scriptsize $\pm.120$} & {\scriptsize $\pm.141$} & {\scriptsize $\pm.173$} & {\scriptsize $\pm.176$} \\
\addlinespace
15 & 150 & 1.871 (0.22) & 1.960 (0.29) & 2.100 (0.43) & 2.258 (0.64) & 2.575 (1.36) \\
   &     & {\scriptsize $\pm.036$} & {\scriptsize $\pm.048$} & {\scriptsize $\pm.069$} & {\scriptsize $\pm.104$} & {\scriptsize $\pm.219$} \\
\addlinespace
20 & 150 & 1.972 (0.18) & 2.064 (0.26) & 2.253 (0.44) & 2.388 (0.65) & 2.527 (0.83) \\
   &     & {\scriptsize $\pm.029$} & {\scriptsize $\pm.042$} & {\scriptsize $\pm.071$} & {\scriptsize $\pm.105$} & {\scriptsize $\pm.135$} \\
\addlinespace
25 & 200 & 2.068 (0.14) & 2.177 (0.21) & 2.390 (0.35) & 2.524 (0.45) & 2.659 (0.56) \\
   &     & {\scriptsize $\pm.020$} & {\scriptsize $\pm.030$} & {\scriptsize $\pm.049$} & {\scriptsize $\pm.062$} & {\scriptsize $\pm.078$} \\
\addlinespace
30 & 200 & 2.160 (0.12) & 2.304 (0.18) & 2.579 (0.31) & 2.780 (0.42) & 2.996 (0.57) \\
   &     & {\scriptsize $\pm.016$} & {\scriptsize $\pm.025$} & {\scriptsize $\pm.043$} & {\scriptsize $\pm.059$} & {\scriptsize $\pm.080$} \\
\addlinespace
100 & 60 & 2.224 (0.08) & 2.290 (0.16) & 2.394 (0.31) & 2.470 (0.43) & 2.548 (0.59) \\
    &    & {\scriptsize $\pm.021$} & {\scriptsize $\pm.042$} & {\scriptsize $\pm.079$} & {\scriptsize $\pm.110$} & {\scriptsize $\pm.152$} \\
\bottomrule
\end{tabular}
\caption{Ratio of average \textsc{Pivot} cost on $G'$ to the min-disagreement
LP (\Cref{eq:copt-upper}) on the planted-clique instances. Each instance of Table~4 is run
with 50 independent seeds (20 for $n=100$). Cells give the mean, the standard
deviation in parentheses, and below it the
$95\%$ confidence interval as $\pm$ a margin.
$^{\dagger}$At $n=10$, realisations with LP optimum $0$ are omitted, leaving
149, 148 and 133 runs at $q=0.15$, $0.25$ and $0.40$.}
\label{tab:clique-pivot-lp}
\end{table}

\begin{table}[t]
\centering
\small
\begin{tabular}{@{}lcccc@{}}
\toprule
 & \multicolumn{4}{c}{$q$} \\
\cmidrule(l){2-5}
Graph & 0.05 & 0.15 & 0.25 & 0.40 \\
\midrule
FB 3980 & 1.709 (0.030) & 1.692 (0.040) & 1.693 (0.038) & 1.709 (0.048) \\
        & {\scriptsize $\pm.011$} & {\scriptsize $\pm.015$} & {\scriptsize $\pm.014$} & {\scriptsize $\pm.018$} \\
\addlinespace
FB 698 & 1.643 (0.025) & 1.750 (0.042) & 1.839 (0.077) & 1.961 (0.128) \\
       & {\scriptsize $\pm.009$} & {\scriptsize $\pm.016$} & {\scriptsize $\pm.029$} & {\scriptsize $\pm.048$} \\
\addlinespace
FB 414 & 1.670 (0.013) & 1.739 (0.016) & 1.801 (0.023) & 1.880 (0.030) \\
       & {\scriptsize $\pm.005$} & {\scriptsize $\pm.006$} & {\scriptsize $\pm.009$} & {\scriptsize $\pm.011$} \\
\addlinespace
FB 686 & 1.978 (0.011) & 1.955 (0.015) & 1.925 (0.015) & 1.871 (0.014) \\
       & {\scriptsize $\pm.004$} & {\scriptsize $\pm.006$} & {\scriptsize $\pm.006$} & {\scriptsize $\pm.005$} \\
\bottomrule
\end{tabular}
\caption{Ratio of average \textsc{Pivot} cost on the subsampled graph to the
min-disagreement LP (\Cref{eq:cc-lp}), for Facebook ego-networks under deletion
probability $q$. Each cell aggregates 30 independent runs and gives the mean,
the standard deviation in parentheses, and below it below it the
$95\%$ confidence interval as $\pm$ a margin}
\label{tab:fb-pivot-lp}
\end{table}

\end{document}